\newif\iffull
	\colorlet{mix}{red!50!black}
	\newcommand*{\rom}[1]{\expandafter\@slowromancap\romannumeral #1@}
	\newcommand{\fpt} {{\sf FPT}\xspace}
	\newcommand{\dd}{\mathbf{d}}
	\newcommand{\restatemarker}{\textup{[$\spadesuit$]} \ignorespaces}
	\crefname{lin}{line}{lines}
	\Crefname{lin}{Line}{Lines}
	\DeclareMathOperator{\operatorClassNP}{NP}
	\newcommand{\classNP}{\ensuremath{\operatorClassNP}}
	\newcommand{\Oh}{\mathcal{O}}
	\DeclareMathOperator{\poly}{poly}
	\newcommand{\fptas}{{\sf FPT-AS}\xspace}
	\newcommand{\ball}{{\sf ball}}
	\newcommand\numberthis{\addtocounter{equation}{1}\tag{\theequation}}
	\newcommand{\wcost}{{\sf wcost}}
	\newcommand{\sucprob}{$\exp\lr{ -O\lr{\frac{k}{\varepsilon} \log(\frac{k}{\varepsilon}) \lambda(\frac{\varepsilon}{40}) }}$}
	\newcommand{\ballintalg}{\ensuremath{\mathcal{C}}\xspace}
	\newcommand{\ballint}{{\sc Ball Intersection}\xspace}
	\newcommand{\kmeans}{$k$-\textsc{Means}\xspace}
	\newlength{\RoundedBoxWidth}
	\newsavebox{\GrayRoundedBox}
	\newenvironment{GrayBox}[1]%
	{\setlength{\RoundedBoxWidth}{.93\textwidth}
		\def\boxheading{#1}
		\begin{lrbox}{\GrayRoundedBox}
			\begin{minipage}{\RoundedBoxWidth}}%
			{   \end{minipage}
		\end{lrbox}
		\begin{center}
			\begin{tikzpicture}%
				\node(Text)[draw=black!20,fill=white,rounded corners,%
				inner sep=2ex,text width=\RoundedBoxWidth]%
				{\usebox{\GrayRoundedBox}};
				\coordinate(x) at (current bounding box.north west);
				\node [draw=white,rectangle,inner sep=3pt,anchor=north west,fill=white] 
				at ($(x)+(6pt,.75em)$) {\boxheading};
			\end{tikzpicture}
	\end{center}}     
	\newenvironment{defproblemx}[2][]{\noindent\ignorespaces%
		\FrameSep=6pt%
		\parindent=0pt%
		\vspace*{-1.5em}
		\ifthenelse{\isempty{#1}}{%
			\begin{GrayBox}{\textsc{#2}}%                
			}{%
				\begin{GrayBox}{\textsc{#2}  parameterized by~{#1}}%  
				}
				\begin{tabular*}{\textwidth}{@{\hspace{.1em}} >{\itshape} p{1.8cm} p{0.8\textwidth} @{}}%        
				}{
				\end{tabular*}%
			\end{GrayBox}%
			\ignorespacesafterend
		}  
		\DeclarePairedDelimiter{\LRb}{\lbrace}{\rbrace}
		\newcommand{\LR}{\LRb*}
		\DeclarePairedDelimiter{\lrb}{(}{)}
		\newcommand{\lr}{\lrb*}
		\newcommand{\cI}{\mathcal{I}}
		\newcommand{\cost}{\mathsf{cost}}
		\newcommand{\OPT}{{\sf OPT}}
		\renewcommand{\d}{\mathsf{dist}}
		\newcommand{\fac}{\mathbb{F}}
		\newcommand{\cli}{P}
		\newcommand{\real}{\mathbb{R}}
		\newcommand{\kmed}{$k$-\textsc{Median}\xspace}
		\newcommand{\kcenter}{$k$-\textsc{Center}\xspace}
		\newcommand{\probname}{{\sc Hybrid $k$-Clustering}\xspace}
		\newtcolorbox{mybox}[1]{colback=white!5!white,colframe=gray!75!black,colbacktitle=white!5!white,coltitle=black!70!black,sharp corners=all,title={#1}}
\title{Dimension-Free Parameterized Approximation Schemes for Hybrid Clustering}
\titlerunning{Dimension-Free Parameterized Approximation Schemes for Hybrid Clustering} 
\author{Ameet Gadekar}
{CISPA Helmholtz Center for Information Security, Saarbr\"{u}cken, Germany}{ameet.gadekar@cispa.de}{https://orcid.org/0009-0004-8040-9881}{Partially supported by the Israel Science Foundation (grant No. 1042/22).}
\author{Tanmay Inamdar}
{Indian Institute of Technology Jodhpur, Jodhpur, India}{taninamdar@gmail.com}{https://orcid.org/0000-0002-0184-5932}{Supported by  IITJ Research Initiation Grant (grant number I/RIG/TNI/20240072).}
\authorrunning{A. Gadekar and T. Inamdar} %TODO mandatory. First: Use abbreviated first/middle names. Second (only in severe cases): Use first author plus 'et al.'
\keywords{Clustering, Parameterized algorithms, FPT approximation, $k$-Median, $k$-Center} %TODO mandatory; please add comma-separated list of keywords
\begin{document}

\maketitle

\begin{abstract}
{\sc Hybrid $k$-Clustering} is a model of clustering that generalizes two of the most widely studied clustering objectives: {\sc $k$-Center} and {\sc $k$-Median}. In this model, given a set of $n$ points $P$, the goal is to find  $k$ centers such that the sum of the \emph{$r$-distances} of each point to its nearest center is minimized.  The $r$-distance between two points $p$ and $q$ is defined as $\max\LR{\d(p, q)-r, 0}$ --- this represents the distance of $p$ to the boundary of the $r$-radius ball around $q$ if $p$ is outside the ball, and $0$ otherwise. This problem was recently introduced by Fomin et al.~[APPROX 2024], who designed a $(1+\varepsilon, 1+\varepsilon)$-bicrtieria approximation that runs in time $2^{(kd/\varepsilon)^{O(1)}} \cdot n^{O(1)}$ for inputs in $\mathbb{R}^d$; such a bicriteria solution uses balls of radius $(1+\varepsilon)r$ instead of $r$, and has a cost at most $1+\varepsilon$ times the cost of an optimal solution using balls of radius $r$.

In this paper we significantly improve upon this result by designing an approximation algorithm with the same bicriteria guarantee, but with running time that is \fpt\ only in $k$ and $\varepsilon$ --- crucially, removing the exponential dependence on the dimension $d$. This  resolves an open question posed in their paper. Our results extend further in several directions. First, our approximation scheme works in a broader class of metric spaces, including  doubling spaces, minor-free, and bounded treewidth metrics. Secondly,  our techniques yield a similar bicriteria {\sf FPT}-approximation schemes for other variants of {\sc Hybrid $k$-Clustering}, such as when the objective features the sum of $z$-th power of the $r$-distances. Finally, we also design a coreset for {\sc Hybrid $k$-Clustering} in doubling spaces, answering another open question from the work of Fomin et al. 
\end{abstract}

\newpage 

\section{Introduction} \label{sec:intro}

% \kcenter, \kmed, along with its closely-related cousin \kmeans 
\kcenter, \kmed, and \kmeans 
are among the most popular clustering problems both in theory and practice, with numerous applications in areas such as machine learning~\cite{llyod,HS85,AryaGKMMP04,braverman2019coresets,anthony2010plant,bhattacharya2014new,pmlr-v134-makarychev21a,ghadiri2021socially}, 
facility location problems~\cite{10.1287/mnsc.9.4.643,ccc4ba45-b343-377b-8adf-62e4761686a7,DBLP:journals/jacm/JainV01,DBLP:conf/stoc/JainMS02,DBLP:journals/iandc/Li13,DBLP:conf/soda/Cohen-Addad0LS23}, and computational geometry~\cite{BadoiuHI02,KumarSS10}, among others.\footnote{These citations are supposed not to be comprehensive, but representative--an interested reader may use them as a starting point for the relevant literature.\label{foot:extensivecite}} All of these problems have long been known to be \classNP-hard, even in the plane \cite{FederG88,megiddo1984complexity,MahajanNV12}. To cope with these intractability results, there has been several decades of research on designing approximation algorithms for these problems, that have lead to polynomial-time constant-factor approximation algorithms for all of them in general metric spaces \cite{HochbaumM85,Cohen-AddadSS21}. Moreover, for more ``structured'' inputs---such as when the points belong to Euclidean spaces, or planar graph metrics---one can find near-optimal solutions using approximation schemes~\cite{DBLP:journals/siamcomp/FriggstadRS19,DBLP:journals/siamcomp/Cohen-AddadKM19,cohen2019polynomial,Arora,KumarSS10,BadoiuHI02,AbbasiBBCGKMSS24}.\footnote{This includes polynomial-time as well as \fpt~approximation schemes.}
%approximation schemes ({\sf PTASes})\footnote{A $(1+\varepsilon)$-approximation in polynomial-time for any fixed $\varepsilon > 0$} \cite{DBLP:journals/siamcomp/FriggstadRS19,DBLP:journals/siamcomp/Cohen-AddadKM19,cohen2019polynomial}, and fixed-parameter tractable approximation schemes\footnote{That is, a $(1+\varepsilon)$-approximation running in time $f(k, \varepsilon) \cdot n^{O(1)}$.}\cite{Arora,KumarSS10,BadoiuHI02,AbbasiBBCGKMSS24}\todo{shall we just replace it with ``approximation schemes'' and not distinguish b/w ptas and fpt-as. it's not very relevant here.}. 
Furthermore, given the simplicity and ubiquity of these vanilla clustering objectives, the techniques used for obtaining these results are subsequently used to find clustering with additional constraints, such as capacities \cite{Cohen-AddadL19Capacitated}, fairness \cite{BandyapadhyayFS24}, and outliers~\cite{KrishnaswamyLS18,AgrawalISX23}.$^{\ref{foot:extensivecite}}$ One recurring theme in the literature has been to define a unified clustering objective that also captures classical clustering problems like \kmed, \kcenter as special cases \cite{Tamir01,ByrkaSS18,ChakrabartyS19Improved}. 

Along this line of research, Fomin et al.~\cite{FominGISZ24} recently introduced \probname. In this problem, we are given an instance $\cI = (M = (\cli, \fac, \d), k, r)$, where $M = (\cli, \fac, \d)$ is a metric space\footnote{Fomin et al.~\cite{FominGISZ24} only considered the special case of Euclidean inputs, i.e., when $\cli \subset \fac = \real^{d}$; however, the underlying problem can be defined for arbitrary metric spaces.} on the set of clients $\cli$ and a set of facilities $\fac$ with distance function $\d$. Here, $k$ is a positive integer denoting the number of clusters, and $r$ is a non-negative real denoting the radius. The goal is to find a set $X \subseteq \fac$ of $k$ centers, such that, $\cost_r(P, X) \coloneqq \sum_{p \in P} \d_r(p, X)$ is minimized---here, for any point $p$, any set $Q \subseteq \cli \cup \fac$, and a real $\alpha$,  $\d_\alpha(p, Q)$ is defined as $\d_\alpha(p, Q)\coloneqq \max\LR{\d(p, Q) - \alpha, 0}$, and $\d(p, Q) \coloneqq \min_{q \in Q} \d(p, q)$. Fomin et al.~\cite{FominGISZ24} proposed several motivations for studying this cost function. Indeed, \probname can be seen as a \emph{shape fitting} problem, where one wants to find the ``best'' set of $k$ balls of radius $r$ that fit the given set of points, where the quality of the solution is determined by the sum of distances of each point to the nearest point on the boundary of the ball---this is analogous to the classical regression, where one wants to find the ``best'' linear function fitting the given set of points. 
%More generally, \probname is closely related to \textsc{Projective Clustering}, a well-studied generalization of linear regression, where the aim is to find $k$ affine spaces that minimize the sum of distances from the points to these spaces. 
\textsc{Projective Clustering} is a well-studied generalization of linear regression, where the aim is to find $k$ affine spaces that minimizes the sum of distances from the points to these spaces (see e.g., \cite{Tukan0ZBF22}), which is also closely related to the model considered in \probname. Furthermore, \cite{FominGISZ24} gave another motivation for studying \probname -- namely, placing $k$ WiFi routers with identical circular coverage, where the clients that lie outside the coverage area need to travel to the boundary of the nearest ball in order to receive coverage. Finally, the name of the problem is motivated from the fact that the objective is a kind of ``hybrid'' between the \kcenter and \kmed costs, and generalizes both of them. Indeed, as observed in \cite{FominGISZ24}, \probname with $r = 0$ is equivalent to \kmed, while, when $r$ is set to be the optimal radius for \kcenter, \probname reduces to \kcenter. These observations immediately rule out \emph{uni-criteria} polynomial-time approximation schemes that violate only the cost, or only the radius by any arbitrary factor $\alpha > 1$ (as formalized in Proposition 1 of \cite{FominGISZ24}). Indeed, such approximation algorithms---even with running times \fpt in $k$---would imply exact \fpt algorithms for \kcenter and \kmed in low-dimensional continuous Euclidean spaces. To our knowledge, such an algorithm for \kmed is not known (nor is admittedly a lower bound), whereas \cite{Marx06} shows W[1]-hardness of \kcenter even in $\mathbb{R}^2$. Therefore, given the current state of the art, a \emph{bicriteria} approximation scheme for \probname is essentially the best outcome, even if one allows a running time that is \fpt in $k$.

For $\alpha, \beta \ge 1$, an \emph{$(\alpha, \beta)$-bicrteria approximation} for \probname is an algorithm that returns a solution $X \subseteq \fac$ of size at most $k$ satisfying $\cost_{\beta r}(P, X) \le \alpha \cdot \OPT_r$. Note that  the bicriteria solution $X$ is allowed to consider $\d_{\beta r}(p, X) = \max\LR{\d(p, X) - \beta r, 0}$ instead of $\d_r(p, X)$ \emph{and} is also allowed to find a solution of cost $\alpha \OPT_r$, where $\OPT_r$ is the cost of an optimal solution w.r.t.~radius $r$, i.e., without any violation of the radius. The main result of Fomin et al.~\cite{FominGISZ24} was an $(1+\varepsilon, 1+\varepsilon)$-bicriteria approximation for inputs in $\real^d$ in time $2^{(kd/\varepsilon)^{O(1)}}\cdot n^{O(1)}$, where $n = |P|$. An exponential dependence on the dimension $d$ appears inevitable using their approach, although it was not clear whether such a dependence is required. Indeed, for Euclidean \kmed and \kcenter, one can obtain \emph{dimension-free} approximation schemes with \fpt dependence only on $k$ and $\varepsilon$~\cite{BadoiuHI02,KumarSS05,AbbasiClustering23}. This naturally motivates the following question, which was explicitly asked as an open question in \cite{FominGISZ24}.
\begin{mdframed}[backgroundcolor=gray!10,topline=false,bottomline=false,leftline=false,rightline=false] 
	\centering
	\textbf{Question 1.} ``An immediate question is whether improving or removing the FPT dependence on the dimension $d$ is possible[...]''
\end{mdframed}
In this work, we answer this question in the affirmative by designing a (randomized) bicriteria FPT Approximation Scheme (\fptas) parameterized by $k$ and $\varepsilon$\footnote{Such algorithms run in time $f(k,\varepsilon)n^{O(1)}$ and output a $(1+\varepsilon,1+\varepsilon)$-bicriteria solution.
	% a solution $X\subseteq F, |X|\le k$ of $\cost_{(1+\varepsilon)r}(P,X)\le (1+\varepsilon)\OPT_r$. 
	Another term for \fptas is {\sf EPAS}, which stands for Efficient Parameterized Approximation Schemes.} for the (continuous) Euclidean instances of \probname, stated formally in the following theorem.

\begin{restatable}[Bicriteria FPT-AS for Euclidean Spaces]{theorem}{euclideanthm}
	There exists a randomized algorithm, that, given an instance of \probname in $\real^d$ for any dimension $d$,  runs in time $2^{O\lr{k \log k \cdot (\nicefrac{1}{\varepsilon^5}) \log^2 (\nicefrac{1}{\varepsilon})}} \cdot n^{\Oh(1)}$, and returns a $(1+\varepsilon, 1+\varepsilon)$-bicrtieria approximation with high probability.
\end{restatable}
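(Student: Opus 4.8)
The plan is to obtain dimension-freeness by \emph{never} building an $\varepsilon$-net or grid of the ambient space $\mathbb{R}^d$ --- this is exactly what forces the $(1/\varepsilon)^{\Theta(d)}$ blow-up in the approach of Fomin et al.\ --- and instead to assemble the $k$ centers one cluster at a time by random sampling, in the spirit of the Kumar--Sabharwal--Sen paradigm and its $D^z$-sampling refinements for $k$-\textsc{Median}/$k$-\textsc{Means}. I would first compute, in polynomial time, an $O(1)$-approximation (or $O(1)$-bicriteria approximation) for \probname --- available in general metrics by standard techniques, cf.\ Fomin et al.\ --- and rescale so that $\OPT_r$ and the relevant distances lie in a polynomial range. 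The algorithm then explores a branching tree of depth $k$; at each leaf it holds a $k$-tuple $(\hat x_1,\dots,\hat x_k)$, evaluates $\cost_{(1+\varepsilon)r}(P,\{\hat x_1,\dots,\hat x_k\})$ in $n^{\Oh(1)}$ time (there is no assignment to optimize, since $\cost$ already minimizes over the center set), and returns the cheapest tuple. The whole point is to show that, with high probability, \emph{some} leaf is a $(1+\varepsilon,1+\varepsilon)$-bicriteria solution, while the number of leaves stays $2^{O(k\log k\cdot\mathrm{poly}(1/\varepsilon))}$.

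The outer loop would be the usual ``superset sampling'' recursion: maintain a partial solution $X_{\mathrm{part}}$, and use the invariant that every not-yet-covered optimal cluster still carrying a non-negligible share of the residual cost is ``heavy'' --- it contains at least a $1/\mathrm{poly}(k)$ fraction of the points not yet well served by $X_{\mathrm{part}}$. Then a uniform sample $S_0$ of $m = \mathrm{poly}(1/\varepsilon)$ such points lands entirely inside the next target cluster $C_i^\star$ with probability at least $k^{-O(\mathrm{poly}(1/\varepsilon))}$; boosting by $k^{O(\mathrm{poly}(1/\varepsilon))}$ independent repetitions makes round $i$ succeed with probability $1 - 1/\mathrm{poly}(k)$, and a union bound over the $k$ rounds gives the claimed guarantee. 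The number of branches is $\bigl(k^{O(\mathrm{poly}(1/\varepsilon))}\bigr)^{k} = 2^{O(k\log k\cdot\mathrm{poly}(1/\varepsilon))}$; a few extra $\mathrm{poly}(1/\varepsilon)$-sized guesses per round (e.g.\ of the residual-cost and assignment-radius scales, which matter only up to a $1+\varepsilon$ factor and hence can be bucketed into $\mathrm{poly}(1/\varepsilon)$ values) are absorbed into the same bound.

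The heart of the argument is the per-cluster subroutine: from $S_0 \subseteq C_i^\star$, produce in polynomial time a center $\hat x_i$ with
\[
\sum_{p \in C_i^\star} \max\{\d(p,\hat x_i) - (1+\varepsilon)r,\, 0\} \ \le\ (1+\varepsilon)\sum_{p \in C_i^\star} \max\{\d(p,x_i^\star) - r,\, 0\}.
\]
Unlike for $k$-\textsc{Median} or $k$-\textsc{Means}, the optimal hybrid $1$-center of a cluster is \emph{not} robust to the cluster's majority: points inside the radius-$r$ ball around the center contribute $0$ no matter where the center sits, so the fine position of $\hat x_i$ is dictated by the (possibly tiny) ``far'' sub-population of $C_i^\star$, which a uniform sample typically misses entirely. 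I would therefore let $S_0$ fix only the coarse ``ball-center'' part of $\hat x_i$, and then run $O(\log(1/\varepsilon))$ correction rounds: in round $j$, sample $\mathrm{poly}(1/\varepsilon)$ points proportionally to their residual cost $\max\{\d(p,\, X_{\mathrm{part}} \cup \{\hat x_i^{(j)}\}) - r,\, 0\}$ (a $D$-sampling step that, by the heaviness invariant, is dominated by genuine far points of $C_i^\star$), unbias using the reciprocal probabilities, and recompute $\hat x_i^{(j+1)}$ as the optimal \emph{weighted} hybrid $1$-center of the accumulated sample --- a convex program solved to sufficient accuracy inside the $O(\mathrm{poly}(1/\varepsilon))$-dimensional affine span of the sample, so this step too is dimension-free. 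A concentration bound for the function class $x \mapsto \max\{\d(p,x) - r, 0\}$ shows the weighted-sample objective tracks the true one, so the residual error contracts by a constant factor each round; after $O(\log(1/\varepsilon))$ rounds it is an $\varepsilon$-fraction of the cluster cost, and the leftover additive $O(\varepsilon r)$ discrepancy on distances is absorbed by the slack hidden in the $(1+\varepsilon)r$ radius. The $2^{\mathrm{poly}(1/\varepsilon)}$ branches describing the structure of each weighted $1$-center, over $O(\log(1/\varepsilon))$ rounds, add only a further $k^{O(\mathrm{poly}(1/\varepsilon))}$ factor per cluster.

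The hard part will be precisely this per-cluster lemma: proving that a uniform sample plus a bounded number of residual-sampling corrections pins the hybrid $1$-center down to within $(1+\varepsilon,1+\varepsilon)$ despite its non-robustness, and controlling the ``noise'' that not-yet-covered clusters inject into the residual sample --- to be charged against the residual cost those mis-sampled points themselves carry, using the heaviness invariant. A secondary point is confirming that a clean $O(1)$-factor (possibly bicriteria) primitive, and the associated scaling, are genuinely available for \probname in arbitrary metrics, so the $D$-sampling distributions can be set up without knowing the optimum; everything else --- the sampling, the convex $1$-center solves, the final metric evaluation --- is manifestly independent of $d$. As an alternative route one could first apply a clustering-preserving random projection to bring the dimension down to $O(\varepsilon^{-2}\log k)$, show that it approximately preserves the hybrid cost of every $k$-clustering, and then enumerate centers from $\varepsilon$-nets around an $O(1)$-approximate solution; this also removes any dependence on the original $d$ and yields a running time of the same shape.
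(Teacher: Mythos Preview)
Your approach is entirely different from the paper's and has a genuine gap at its core. The paper never attempts to reconstruct a center from a sample of its cluster in the KSS style. Instead it instantiates the algorithmic-scatter-dimension framework of Abbasi et al.: maintain for each $i\in[k]$ a set $Q_i$ of distance constraints, compute $x_i$ via a \textsc{Ball Intersection} oracle that returns any point of $\fac$ satisfying all constraints in $Q_i$ up to a $(1+\varepsilon/40)$-factor, and in each round sample one \emph{witness} $p$ with $\d_{(1+\varepsilon/3)r}(p,X)>(1+\varepsilon/3)\,\d_r(p,O)$, guess its optimal cluster index $i$, and append the single constraint $\bigl(p,\d(p,X)/(1+\varepsilon/12)\bigr)$ to $Q_i$. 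The new ingredient over Abbasi et al.\ is a coin-flip split of witnesses into ``nearby'' ($\d(p,X)\le 8r/\varepsilon$) and ``faraway'', which forces the request radii into the union of two bounded-ratio intervals; the total number of rounds is then $O\!\bigl(k\varepsilon^{-1}\log(k/\varepsilon)\,\lambda(\varepsilon/40)\bigr)$, and plugging in the known bound on the algorithmic scatter dimension of continuous Euclidean space yields the stated exponent as a direct corollary of the general theorem. The non-robustness of the hybrid $1$-center that you correctly flag is simply never confronted: a sampled point contributes one scalar constraint, and a convex-feasibility oracle produces the center.

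The gap in your plan is precisely the per-cluster lemma. In your $D$-sampling correction rounds you draw from the residual $\max\{\d(\cdot,X_{\mathrm{part}}\cup\{\hat x_i^{(j)}\})-r,0\}$ over all of $P$, so every not-yet-processed cluster contributes. Your heaviness invariant concerns $C_i^\star$ as a whole, but once the bulk of $C_i^\star$ lies inside $\ball(\hat x_i^{(j)},r)$ it contributes zero to that residual; what remains is only the far tail of $C_i^\star$, and nothing forces that tail to carry an $\Omega(1/\mathrm{poly}(k))$ share of the global residual --- it can be arbitrarily light relative to the residual of the other unprocessed clusters, so with high probability no correction round ever samples it, and reciprocal-probability unbiasing cannot manufacture mass that was never drawn. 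The claimed constant-factor contraction per round is therefore unsupported. Your Johnson--Lindenstrauss alternative is more promising --- the additive $\pm\varepsilon r$ distortion on $r$-distances is indeed absorbed by the bicriteria slack --- but it still needs (i) a preservation statement for hybrid costs over \emph{all} centers in the projected space, not just a finite witness set, and (ii) an enumeration whose exponent is linear in $k$; plugging the projected dimension $d'=O(\varepsilon^{-2}\log k)$ into the Fomin et al.\ algorithm as written gives $2^{(kd'/\varepsilon)^{O(1)}}$, i.e.\ $\mathrm{poly}(k)$ rather than $k\log k$ in the exponent.
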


The algorithm of \cite{FominGISZ24} involves a pre-processing step that separately handles ``\kmed-like'', ``\kcenter-like'' instances, using the techniques specific to these respective problems. Then, the main algorithm handles the remaining instances that cannot be directly reduced to the respective problems. This approach somewhat undermines the goal of defining a unified problem that captures both of these problems as special cases. In contrast, our algorithm adopts a uniform approach by exploiting the intrinsic structure of the hybrid problem, without the need to separately handle some instances. 
%This allows us to exploit the , providing a more cohesive solution.
%\todo{I have added this para to compare with the previous work. We can decide if we wan to keep it here or move it to the overview.}
In fact, 
% this result is obtained as a corollary of a more general {\sf FPT-AS} that 
our algorithm works for a broader class of metric spaces, called metric spaces of \emph{bounded algorithmic scatter dimension}---a notion recently introduced by Abbasi et al.~\cite{AbbasiClustering23}, and further studied by Bourneuf and Pilipczuk~\cite{BourneufP25}. These metric spaces capture several interesting and well-studied metric spaces, see below. 
%(continuous) Euclidean spaces of any dimension, metric spaces of bounded doubling dimension (i.e., doubling spaces), metric spaces induced by shortest paths in planar and bounded treewidth graphs. 
We give a formal definition of the notion of algorithmic scatter dimension in \Cref{sec:prelims}; however, a good mental picture to keep is to think of them as essentially doubling spaces, i.e., metric spaces with good ``packing-covering'' properties.\footnote{Although this is a good mental picture, it is ultimately inaccurate since the class of metric spaces of bounded algorithmic scatter dimension is strictly larger than that of doubling spaces. Indeed, continuous Euclidean space of high ($\omega(\log n)$) dimension does not have bounded doubling dimension, yet it does have bounded algorithmic scatter dimension.} Our general result is stated below.
\begin{restatable}[Informal version of~\Cref{thm:main}]{theorem}{maintheorem} \label{thm:maintheorem}
	\probname admits a randomized bicriteria \fptas in metrics of bounded algorithmic $\varepsilon$-scatter dimension, parameterized by $k$ and $\varepsilon$. In particular, \probname admits randomized bicriteria \fptas parameterized by $k$ and $\varepsilon$, in continuous Euclidean spaces of any dimension, metrics of bounded doubling dimension, bounded treewidth metrics, and metrics induced by graphs from any fixed proper minor-closed graph class.
	% There exists a randomized algorithm that takes as an input an instance
	% $\cI = ((\cli \cup \fac, \d), k, r)$ of \probname, runs in time $2^{O(\frac{k \log k \lambda(\varepsilon/40)}{\varepsilon})} \cdot |{\cI}|^{O(1)}$, and with high probability, returns a $(1+\varepsilon, 1+\varepsilon)$-bicriteria approximation; here $\lambda(\cdot)$ is the algorithmic scatter dimension of the metric space $M = (\cli \cup \fac, \d)$, and $|\cI|$ denotes the input size. 
\end{restatable}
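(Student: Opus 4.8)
The plan is to recast \probname into the iterative, ball-intersection--based framework for clustering in metrics of bounded algorithmic scatter dimension (\cite{AbbasiClustering23,BourneufP25}), after first absorbing the truncation in $\d_r$ into the bicriteria slack. The key preliminary observation is that allowing radius $(1+\varepsilon)r$ converts the \emph{additive} truncation into a purely \emph{multiplicative} guarantee on the underlying metric: if $x^\star$ is the center serving a client $p$ in a fixed optimal solution, then any center $x$ with $\d(p,x)\le (1+\varepsilon)\max\{\d(p,x^\star),r\}$ already satisfies $\d_{(1+\varepsilon)r}(p,x)\le (1+\varepsilon)\,\d_r(p,x^\star)$. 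Writing $\d'(u,v):=\max\{\d(u,v),r\}$ for the ``$r$-floored'' distance, this means that producing a $k$-point set $X$ which serves every client $p$ within a $(1+\varepsilon)$ factor of its optimal $\d'$-distance already gives a $(1+\varepsilon,1+\varepsilon)$-bicriteria solution; so it suffices to solve a \kmed-type problem with respect to $\d'$. The first lemma I would establish is that $\d'$ inherits bounded algorithmic $\varepsilon$-scatter dimension from $\d$: an $\varepsilon$-scattering of $\d'$ at a scale $R<r$ is vacuous, while at scale $R\ge r$ it is literally an $\varepsilon$-scattering of $\d$ at scale $R$; likewise the ball-intersection oracle for $\d'$ is a trivial wrapper around that of $\d$ (reject any requested radius below $r$, otherwise defer to $\d$).

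With this in hand, I would run the standard refinement loop of the framework on $\d'$. After guessing $\OPT_r$ up to a $(1+\varepsilon)$ factor ($n^{O(1)}$ choices) and a preprocessing that leaves only $\poly(k/\varepsilon)$ relevant distance scales, maintain for each cluster $i\in[k]$ a growing family $\mathcal{B}_i$ of $\d'$-balls, all initially empty. In each round: call the oracle for candidate centers $x_1,\dots,x_k$ that are, up to the oracle's $(1+\varepsilon)$ slack, simultaneously consistent with their respective $\mathcal{B}_i$; compute $\sum_{p\in\cli}\d_{(1+\varepsilon)r}(p,\{x_1,\dots,x_k\})$ \emph{exactly} on all of $\cli$ (affordable within the $n^{O(1)}$ factor); if this is at most $(1+\varepsilon)\OPT_r$, output $\{x_1,\dots,x_k\}$. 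Otherwise an averaging argument produces a client $p$ whose current cost exceeds $(1+\varepsilon)$ times its optimal cost, which forces $\d(p,x_i)>(1+\varepsilon)\,\d'(p,x_i^\star)$ where $i$ is the optimal cluster of $p$; we \emph{guess} $i$ ($k$ choices per round) and add to $\mathcal{B}_i$ the ball of radius $\d(p,x_i)/(1+\varepsilon)$ around $p$ --- a radius readable from the current state, so no scale-guessing is needed. The loop terminates because (i) $x_i^\star$ satisfies every constraint ever placed in $\mathcal{B}_i$, so the oracle never spuriously reports infeasibility on correct cluster-guesses, and (ii) the newly added constraint is violated by the current center $x_i$ by exactly a $(1+\varepsilon)$ factor while $x_i$ was consistent with all earlier members of $\mathcal{B}_i$; hence the sequence of (center, witness-client) pairs produced for cluster $i$ forms an $\varepsilon$-scattering of $\d'$, whose length --- and thus the number of rounds charged to $i$ --- is at most $\lambda(\varepsilon)$ by the scatter-dimension assumption. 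Summing over clusters bounds the number of rounds by $k\lambda(\varepsilon)$; enumerating the $k^{k\lambda(\varepsilon)}$ cluster-guess sequences and the $\OPT_r$-guesses gives overall running time $f(k,\varepsilon)\cdot n^{O(1)}$.

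I expect the bulk of the technical work --- and the origin of the $\poly(1/\varepsilon)\log^2(1/\varepsilon)$ factor in the exponent --- to be the scale- and slack-bookkeeping, which is genuinely more delicate here than for plain \kmed. One has to show that only $\poly(k/\varepsilon)$ distance scales ever matter (distances below $\varepsilon\OPT_r/(kn)$ can be snapped away at total cost $O(\varepsilon)\OPT_r$; no client has optimal cost above $\OPT_r$; clients at optimal distance far above $r$ are precisely the expensive ones, so their number and combined weight are controlled), so that the $\varepsilon$-scatterings really live at boundedly many scales; one must also check that the averaging step meshes with the $r$-floor (a client of optimal cost $0$ only needs a center within $(1+\varepsilon)r$, contributing to its cluster's scattering at scale exactly $r$) and that the per-round slacks --- oracle slack, scale rounding, and averaging --- accumulated over all $k\lambda(\varepsilon)$ rounds still sum to $O(\varepsilon)\OPT_r$, which forces running the whole scheme with $\varepsilon$ replaced by about $\varepsilon^{O(1)}/(k\lambda(\varepsilon))$ internally. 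Randomization enters through the ball-intersection oracle promised by bounded algorithmic scatter dimension --- for continuous Euclidean space of unbounded dimension it relies on a randomized dimension-reduction step --- together with a sampling subroutine used to seed the bounded per-cluster information; all failure probabilities are $n^{-O(1)}$ and amplify to high probability over $f(k,\varepsilon)\cdot n^{O(1)}$ independent repetitions. Finally, the stated instantiations (continuous Euclidean spaces of any dimension, bounded doubling dimension, bounded treewidth, and any fixed proper minor-closed class) follow by inserting the known bounds on their algorithmic $\varepsilon$-scatter dimension from \cite{AbbasiClustering23,BourneufP25}.
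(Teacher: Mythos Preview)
Your reduction to the floored distance $\d'(u,v)=\max\{\d(u,v),r\}$ and the observation that $\d(p,x)\le(1+\varepsilon)\d'(p,x^\star)$ suffices for a $(1+\varepsilon,1+\varepsilon)$-bicriteria guarantee are both correct and match the paper's use of the inflated radius $r'$. The overall refinement loop, the witness-averaging step, and the consistency argument are also in line with the paper. However, there is a genuine gap in your termination argument.

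You write that the per-cluster sequence of (center, witness) pairs ``forms an $\varepsilon$-scattering of $\d'$, whose length is at most $\lambda(\varepsilon)$.'' This is incorrect: the algorithmic $\varepsilon$-scatter dimension bounds the number of triples \emph{per radius value}, not the total length. To bound the total number of rounds you must control the aspect ratio of the request radii $\delta_p=\d(p,X)/(1+\varepsilon)$. Your suggested fixes do not do this. Snapping costs below $\varepsilon\OPT_r/(kn)$ and observing that no client has optimal cost above $\OPT_r$ constrain the $r$-\emph{costs} $\d_r(p,\cdot)$, not the true distances $\d(p,X)$ that determine the request radii. A witness with $\d_r(p,O)=0$ can have $\d(p,X)$ anywhere in $((1+\varepsilon)r,\,3.1\,u(p)]$, and across different witnesses the radii sweep the whole range $[r,\,\Theta(\max\{r,\OPT_r\})]$, giving an aspect ratio of order $\OPT_r/r$---unbounded in $k,\varepsilon$. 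The standard trick of filtering to $A=\{p:\d_{r'}(p,X)>\varepsilon u(p)/\poly(k)\}$ and arguing that $\cli\setminus A$ carries negligible cost also fails here, because $u(p)>3r$ for every $p$, so $\sum_p u(p)>3nr$ can vastly exceed $\OPT_r$.

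This is precisely the obstacle the paper isolates as its main technical contribution. The missing idea is a \emph{nearby/faraway witness dichotomy}: in each round one flips a coin and either samples from $N=\{p:\d(p,X)\le 8r/\varepsilon\}$ (request radii then lie in $[r,8r/\varepsilon]$ trivially) or from the filtered set $A$ restricted to points with $\d(p,X)>8r/\varepsilon$ (where a $W^+/W^-$ splitting of each faraway witness cluster shows $W^+\subseteq A$ and carries enough mass, after which a Claim~3.8--type argument bounds the radii within a single interval of aspect ratio $\poly(k/\varepsilon)$). The request radii then live in the union of two bounded-ratio intervals, and \Cref{cor:scdonint} bounds the sequence length. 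Without this dichotomy, your loop need not terminate in $f(k,\varepsilon)$ rounds.
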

% \todo{Ameet: state informally and state formal version in the main section.}
We give a technical overview of this result in \Cref{ss:overview}, and describe the approximation algorithm and its analysis in \Cref{sec:algorithm}. 

In their work, \cite{FominGISZ24} also defined {\sc Hybrid $(k, z)$-Clustering} problem, which features the $z$-th power of $r$-distances, i.e., the objective is to minimize $\cost_{r}(P, X, z) \coloneqq \sum_{p \in P} (\d_r(p, X))^z$, where $z \ge 1$. They designed a bicriteria \fptas with similar running time for {\sc Hybrid $(k, 2)$-Clustering}, i.e., a hybrid of \kmeans and \kcenter; but left the possibility of obtaining a similar result for the general case of $z \ge 1$ conditional on the existence of a certain sampling-based approximation algorithm for the $(k, z)$-clustering problem for Euclidean inputs. Using our approach, we can obtain a bicriteria \fptas for any fixed $z \ge 1$ in a unified way, whose running time is independent of the dimension $d$. In fact, our approach works for a much more general problem of {\sc Hybrid Norm $k$-Clustering}. We discuss these extensions in \Cref{sec:extensions}. 

Next, we turn to another open direction mentioned in the work of Fomin et al.~\cite{FominGISZ24}.
\begin{mdframed}[backgroundcolor=gray!10,topline=false,bottomline=false,leftline=false,rightline=false] 
	\textbf{Question 2.} ``Another intriguing question is the design
	of coresets for \probname, which could also have some implications for [\textbf{Question 1}].''
	%At a high level, designing coresets for \probname appears to be challenging, since \emph{a priori} we do not know which points belong inside the radius-$r$ balls (and thus contribute $0$ to the cost), and which ones lie outside, and hence their cost needs to be approximately preserved.''
\end{mdframed}
In this paper, we also answer this question affirmatively by designing coresets for \probname in metric spaces of bounded doubling dimension. Specifically, we prove the following result.
\begin{restatable}[Coreset for \probname]{theorem}{coresettheorem}\label{thm:coresetthm}
	There exists an algorithm that takes as input an instance $\cI = ((\cli, \fac, \d), k, r)$ of \probname in doubling metric of dimension $d$ and a parameter $\varepsilon \in (0, 1)$, and  in time  $2^{O(d\log(1/\varepsilon))} |\cI|^{O(1)}$ returns a pair $(\cli', w)$, where $\cli' \subseteq \cli$ has size $2^{O(d \log(1/\varepsilon)} k \log |P|$, and $w: \cli' \to \mathbb{N}$, such that the following property is satisfied for any $X \subseteq \fac$ of size at most $k$:
	$$\left| \wcost_r(\cli', X) - \cost_r(\cli, X) \right| \le \varepsilon \cost_r(\cli, X)$$
	Here, $\wcost_r(\cli, X) \coloneqq \sum_{p \in \cli'} w(p) \cdot \d_r(p, X)$. 
\end{restatable}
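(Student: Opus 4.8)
The plan is to adapt the classical hierarchical-net coreset construction of Har-Peled--Mazumdar and Chen (originally for \kmed) to the $r$-shifted objective $\cost_r$, using only the packing property of doubling metrics. First, compute in polynomial time a constant-factor approximate solution $A = \{a_1, \dots, a_{k'}\}$ to \probname, with $k' \le k$, and set $\nu \coloneqq \cost_r(\cli, A)$, so that $\OPT_r \le \nu = O(\OPT_r)$. The observation driving the reduction is that $\cost_r(\cli, X) \ge \OPT_r = \Omega(\nu)$ for \emph{every} $X \subseteq \fac$ with $\lvert X\rvert \le k$. Hence it suffices to construct $(\cli', w)$ with the \emph{additive} guarantee $\lvert \wcost_r(\cli', X) - \cost_r(\cli, X)\rvert \le \varepsilon' \nu$ for a rescaled $\varepsilon' = \Theta(\varepsilon)$, as this immediately implies the required multiplicative bound.

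\textbf{Construction.} Assign every client to its nearest center of $A$, obtaining clusters $\cli_1, \dots, \cli_{k'}$. Fix a base radius $R \coloneqq \Theta(\varepsilon' \nu / \lvert \cli \rvert)$ and, for each $i$, split $\cli_i$ into the \emph{inner part} $\cli_i \cap B(a_i, r + R)$ and the \emph{rings} $\cli_i \cap \bigl(B(a_i, r + 2^j R) \setminus B(a_i, r + 2^{j-1} R)\bigr)$ for $j = 1, 2, \dots$. Since, for $p \in \cli_i$, the client $p$ at distance $D$ from $a_i$ contributes exactly $\max\{D - r, 0\}$ to $\nu$, every client of $\cli_i$ lies within distance $r + \nu$ of $a_i$, so only $O(\log \lvert \cli \rvert)$ rings are nonempty. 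Exploiting the doubling property of dimension $d$, cover each ring --- contained in a ball of radius $r + 2^j R$ --- by an $(\varepsilon'' 2^j R)$-net of $2^{O(d \log(1/\varepsilon''))}$ points taken from $\cli$ itself, and likewise net the inner part at scale $\varepsilon'' R$, for a suitable $\varepsilon'' = \Theta(\mathrm{poly}(\varepsilon))$. Snap every client to its nearest net point, give that net point weight equal to the number of clients assigned to it, and let $\cli'$ be the net points of positive weight. Multiplying $k'$ clusters, $O(\log \lvert \cli \rvert)$ rings, and $2^{O(d \log(1/\varepsilon))}$ net points per ring gives the claimed size, and the running time is dominated by the computation of $A$ and of the nets.

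\textbf{Correctness.} Fix any $X \subseteq \fac$ with $\lvert X \rvert \le k$. If $p$ has representative $q$, then $\lvert \d(p, X) - \d(q, X)\rvert \le \d(p, q)$ by the triangle inequality, and since $t \mapsto \max\{t - r, 0\}$ is $1$-Lipschitz, $\lvert \d_r(p, X) - \d_r(q, X)\rvert \le \d(p, q)$. Summing over all clients, $\lvert \wcost_r(\cli', X) - \cost_r(\cli, X)\rvert \le \sum_{p \in \cli} \d(p, \mathrm{rep}(p))$. A client in ring $j \ge 1$ of $\cli_i$ moves by at most $\varepsilon'' 2^j R$, while its $r$-distance to $a_i$ is at least $2^{j-1} R$; summing geometrically over rings and clusters bounds the ring part by $O(\varepsilon'') \cdot \cost_r(\cli, A) = O(\varepsilon'') \nu$. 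A client of an inner part moves by at most $\varepsilon'' R$, and there are at most $\lvert \cli \rvert$ of these, contributing at most $\varepsilon'' R \lvert \cli \rvert = O(\varepsilon' \varepsilon'') \nu$. Taking $\varepsilon''$ a small enough polynomial in $\varepsilon$ makes the total displacement at most $\varepsilon' \nu \le \varepsilon' \cost_r(\cli, X)$, which is the bound we needed. Moreover, since $\cli' \subseteq \cli$, whenever $\cost_r(\cli, X) = 0$ --- i.e., all clients lie within distance $r$ of $X$ --- the subset $\cli'$ does too, so $\wcost_r(\cli', X) = 0 = \cost_r(\cli, X)$, consistently with the guarantee.

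\textbf{Main obstacle.} The crux is the treatment of clients near the approximate centers (the inner parts and the first rings): because of the $r$-shift, such a client may contribute arbitrarily little --- possibly zero --- to $\cost_r(\cli, X)$ for the $X$ under consideration, so its displacement cannot be charged to its own contribution as in the \kmed\ analysis. The remedy is to shrink the base radius $R$ to order $\varepsilon' \OPT_r / \lvert \cli \rvert$ so that the \emph{aggregate} displacement of all inner clients is charged globally against $\OPT_r \le \cost_r(\cli, X)$. The tension this creates is that a net at such a fine scale inside a ball of radius $r + R$ would blow up the $2^{O(d \log(1/\varepsilon))}$ size bound precisely when $\OPT_r \ll r \lvert \cli \rvert$ (the ``\kcenter-like'' regime); handling that regime requires a separate argument --- either rescaling so that $R = \Theta(\varepsilon' r)$ already suffices, or observing that the near clients able to straddle the boundary $\d(\cdot, X) = r$ for some size-$k$ set $X$ are, up to merging clients equidistant from $X$, controlled by $k$ and the number of scales. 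Making this regime analysis (and the constant-factor poly-time approximation that underlies the choice of $\nu$) watertight, together with the bookkeeping that keeps the number of nonempty rings at $O(\log \lvert \cli \rvert)$ independently of the aspect ratio, is where the real work lies; the Lipschitz-plus-triangle-inequality core of the argument is routine.
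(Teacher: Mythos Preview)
Your high-level plan---a Har-Peled--Mazumdar grid coreset, using the $1$-Lipschitzness of $t\mapsto\max\{t-r,0\}$ to transfer displacement bounds to $r$-distances---is exactly the paper's, and your correctness skeleton (charge ring displacements to $\cost_r(\cli,A)$, charge inner-part displacements globally to $\OPT_r$ via $R=\Theta(\OPT_r/n)$) matches as well.

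The substantive gap is one you flag but understate: a polynomial-time constant-factor \emph{uni-criteria} approximation $A$ with $|A|\le k$ is not merely unknown, it cannot exist unless $\mathrm{P}=\mathrm{NP}$. For instances with $\OPT_r=0$, any finite-ratio approximation would have to output a radius-$r$ cover by $k$ balls, i.e., solve the NP-hard $k$-\textsc{Center} decision problem. The paper's key additional ingredient is to start instead from a polynomial-time $(18,6)$-\emph{bicriteria} approximation $A$ and then, exploiting the doubling property, cover each $\ball(a,12r)$, $a\in A$, by $2^{O(d)}$ sub-balls of radius $r/2$, keeping one representative from $\cli\cup\fac$ per nonempty sub-ball. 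The resulting set $T$ has $|T|=2^{O(d)}k$ and $\cost_r(\cli,T)\le 36\,\OPT_r$: every client within $12r$ of $A$ is now within distance $r$ of $T$ (so contributes $0$ to $\cost_r$), and every client farther out satisfies $\d(p,T)\le\d(p,A)\le 2\,\d_{6r}(p,A)$. This bicriteria-to-unicriteria conversion---at the cost of $2^{O(d)}$ extra centers, absorbed into the stated coreset size---is precisely the missing piece that makes your outline go through; the paper then runs the ring decomposition around the centers of $T$ in terms of the \emph{actual} distance $\d(\cdot,T)$ rather than your $r$-shifted radii, which is its way of sidestepping the ``inner-part blow-up'' you correctly identify as the main obstacle.
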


\subsection{Technical Overview}\label{ss:overview}

In this section, we highlight our technical and conceptual contributions of our main result (\Cref{thm:maintheorem}).
%, and then describe our approach for coreset construction (\Cref{thm:coresetthm}).
% Later, %in~\Cref{ss:overviewfptas}. Then, in~\Cref{ss:overviewcoreset}, 
%we present the intuition behind the coreset construction (~\Cref{thm:coreset}).
%\todo{Add references to corresponding lemmas?}
%\paragraph*{Bicriteria FPT-AS}
%For ease of exposition, we highlight our contribution for \probname in continuous high-dimensional Euclidean space.

A natural direction to design dimension-free parameterized approximation algorithm for \probname is to leverage insights from the framework of Abbasi et al.~\cite{AbbasiClustering23}, who designed dimension-free parameterized approximations for wide range of clustering problems. 
However, in \probname, the objective is a function of $r$-distances, instead of true distances, as in~\cite{AbbasiClustering23}. This introduces several technical challenges. For instance, a most obvious roadblock is that the $r$-distances do not satisfy triangle inequality. Moreover, many of the ``nice'' properties that are enjoyed by the true distances in ``structured'' metric spaces (e.g., covering-packing property in doubling spaces, or properties of shortest paths in sparse graphs) do not extend readily to the $r$-distances. Therefore, we have to overcome several technical and conceptual challenges in order to leverage the ideas presented in the framework of \cite{AbbasiClustering23}. Let us first recall the key notion from this paper that is crucially used in this framework.\footnote{In fact, the algorithm of our paper and \cite{AbbasiClustering23} works for a weaker notion called algorithmic scatter dimension. But, for ease of exposition, we work with scatter dimension in this section.} 

\noindent\textbf{\textsf{Scatter Dimension.}}
Informally, an $\varepsilon$-scattering sequence in a metric space $M=(\cli,\fac,\d)$ is a sequence of center-point pairs  $(x_1,p_1),\dots, (x_\ell,p_\ell)$, where $\ell$ is some positive integer and for $j\in[\ell], x_j \in \fac$, $p_j\in P$ such that  $\d(p_{j},x_{i}) \le 1$, for all $1 \le j <i\le [\ell]$ and $\d(p_{i},x_i) > (1+\varepsilon) $, for all $i \in [\ell]$.     The \emph{$\varepsilon$-scatter dimension} of $M$ is the maximum length of $\varepsilon$-scattering sequence contained in $M$. 

Now, consider a natural extension of the framework~\cite{AbbasiClustering23} for \probname in a metric space $M$ as follows. Let $\OPT_r$ be the optimal cost of the \probname instance corresponding to an optimal solution $O$. The algorithm maintains cluster constraint $Q_i$ for each cluster $i \in [k]$. Each $Q_i$ consists of a collection of requests of the form $(p,\delta)$, where $p$ is a point and  $\delta$ is a distance, representing the demand that $p$ requires a center within distance $\delta$. The  algorithm always maintains a solution $X$ such that $x_i \in X$ satisfy $Q_i$ cluster constraint, for all $i\in[k]$. 
Now consider the sequence of triples $S_i=(x^{(1)}_i,p_i^{(1)},\delta_i^{(1)}),\dots,(x^{(\ell)}_i,p_i^{(\ell)},\delta_i^{(\ell)})$ corresponding to requests in $Q_i$, where $x^{(j)}_i$ is the $i^{\text{th}}$ center maintained by the algorithm just before adding request $(p_i^{(j)},\delta_i^{(j)})$ to $Q_i$.
If $X$ is a near-optimal solution, then the algorithm terminates successfully and returns $X$. Otherwise, it identifies a point $p\in P$ whose  $r$-distance to $X$ is much larger than its $r$-distance to $O$. Such a point is called a \emph{witness} to $X$. A simple averaging argument shows that such a witness can be sampled with high probability. The algorithm then guesses the optimal cluster $i \in [k]$ of $p$ and add a new request $(p,\delta_p=\d(p,X)/(1+\varepsilon'))$ to  $Q_i$, for some suitable but fixed $\varepsilon'$ depending on $\varepsilon$. The center $x_i$  is recomputed to satisfy the updated cluster constraint $Q_i$, if possible. Otherwise the algorithm reports failure.\footnote{E.g., if the algorithm failed to sample a witness.}
The key observation is that the sequence of pairs  $(x^{(j_1)}_i,p_i^{(j_1)}),\dots,(x^{(j_\ell)}_i,p_i^{(j_\ell)})$  for a fixed radius $\delta_i^{(j)}$ forms an $\varepsilon$-scattering sequence in $M$. Thus, if the $\varepsilon$-scatter dimension of $M$ is bounded then the length of these sequences are also bounded. Furthermore, it can be shown that if the aspect ratio of the radii of requests in $Q_i$ for all  $i \in [k]$ is bounded, then the number of iterations of the algorithm can be bounded, yielding an FPT-AS.

\noindent\textbf{\textsf{Working with the inflated radius.}}
A major challenge that arises when the witness is defined in terms of $r$-distances, but the requests added to the cluster constraints are based on the true distances. Specifically, we need to ensure that a witness whose $r$-distance to $X$ is significantly larger than its $r$-distance to $O$ also has a larger true distance to $X$ than  to $O$. This  ensures that the request $(p,\d(p,X)/(1+\varepsilon'))$ is feasible and the algorithm does not fail. However, this condition may not hold, especially when the true distances are close to $r$. In fact, the issue is related to a fundamental barrier identified  by~\cite{FominGISZ24}, assuming standard conjectures.\footnote{Such a framework could potentially yield a near-optimal solution that does not violate the radius, contradicting Proposition 1 in~\cite{FominGISZ24}.}
To overcome this fundamental barrier and maintain a sufficient gap between the true distances, we consider the cost of $X$ with respect to $(1+\varepsilon)r$ radius. In other words, we look for a solution $X$ whose $(1+\varepsilon)r$-cost is close to $\OPT_r$.
In this case, we redefine a witness as a point in $P$ whose $(1+\varepsilon)r$-distance to $X$ is much larger than its $r$-distance to $O$. 
These insights allow us to establish a gap between  the true distance of a witness to $X$ and its true distance to $O$, thereby justifying the requests  added by the algorithm.

\noindent\textbf{\textsf{Bounding the aspect ratio of the radii.}}
Abbasi et al.~\cite{AbbasiClustering23} bound the aspect ratio of the radii by $(i)$ initializing the solution using ``good'' upper bounds, and $(ii)$ 
sampling witness from the points that have comparable distance to $X$ with respect to the corresponding upper bounds. The first step guarantees that the solution $X$ maintained by the algorithm always satisfies the upper bounds within constant factor. This together with  the second step allows one to bound the aspect ratio of radii in each $Q_i$. They show that such witnesses have a good probability mass if the points are sampled proportional to their true distances, and hence the algorithm successfully finds a witness for $X$ with probability that is a function of $k$ and $\varepsilon$.
Although,  devising feasible and ``good'' upper bounds for \probname can be done with some efforts, the main challenge arises in the second step, which only guarantees to sample a point whose $(1+\varepsilon)r$-distance to $X$ is comparable with its upper bound. As mentioned before, these $(1+\varepsilon)r$-distances can be much smaller than the corresponding true distances, and hence there is no guarantee that true distance of a witness $p$ to current solution $X$ is comparable to  its upper bound. Towards this, we introduce a novel idea of dividing witnesses into two sets --- ``nearby witness'' set, i.e., the set of points within distance $O(r/\varepsilon)$ from $X$, and ``faraway witness'' set, which are beyond  this distance from $X$.
The observation is that, since the total probability mass on the witness set, now defined using $(1+\varepsilon)r$-distances, is still high, it must be that either the nearby or the faraway witnesses have sufficient probability mass. However, since we do not know which of the two sets has enough mass, we perform a randomized branching (i.e., make a guess)---which will be ``correct'' with probability $1/2$. Then, when a point is sampled proportional to its $(1+\varepsilon)r$-distance from either the ``nearby'' or ``faraway'' set, it will be a witness with good probability. Now consider each of two cases separately. Since for a nearby witness $p$, its true distance to $X$ is at least $(1+\varepsilon)r$  and at most $O(r/\varepsilon)$, the aspect ratio of the radii of requests corresponding to nearby witness set is bounded by $O(1/\varepsilon)$. On the other hand, for requests corresponding to faraway witness set, we show that their radii lie in bounded interval, using ideas similar to~\cite{AbbasiClustering23}. Note that, these two arguments imply that the radii of the requests lie in two (possibly disjoint)  intervals that themselves are bounded. However, it is not clear if the length of these requests is bounded, unlike~\cite{AbbasiClustering23}, where the radii belonged to a single interval of bounded aspect ratio. Nevertheless, we observe that, using the techniques of~\cite{AbbasiClustering23}, the length of requests can be bounded, even when the radii lie in constantly many (here, two)  intervals, each with a bounded aspect ratio.

\section{Background on Algorithmic Scatter Dimension} \label{sec:prelims}

In this paper, we consider metric (clustering) space $M=(\cli,\fac,\d)$, where $\cli$ is a finite set of $n$ points, $\fac$ is a (possible infinite) set of potential cluster centers, and $\d$ is a metric on $(\cli \cup \fac)$. A class $\mathcal{M}$ of metric spaces is a (possibly infinite) set of metric spaces.

In this paper, we work with a notion that is weaker (and hence more general) than $\varepsilon$-scatter dimension that was defined in the overview, called \emph{algorithmic $\varepsilon$-scatter dimension}, which we explain next. To this end, we first need the following search problem.
\begin{definition}[\ballint Problem]\label{def:ball-int}
	Let $\mathcal{M}$ be a class of metric spaces with possibly infinite set of centers. Given $M=(\cli,\fac,\d)\in \mathcal{M}$, a finite set $Q \subsetneq P \times \mathbb{R}_+$ of distance constraints, and an error parameter $\eta >0$, the \emph{Ball Intersection} problem asks to find a center $x \in F$ that satisfy all distance constraints within $\eta$ multiplicative error, i.e., $\d(x,p) \le (1+\eta)\delta$, for every $(p,\delta)\in Q$, if such a center exists, and report failure otherwise.
	
	We say $\mathcal{M}$ admits a \ballint algorithm  if it correctly solves the ball intersection problem for every metric space in $M$ and runs in polynomial time in the size of $M$ and $1/\eta$.
\end{definition}
Now, we are ready to define algorithmic scatter dimension.
\begin{definition}[Algorithmic $\varepsilon$-Scatter Dimension]\label{def:algsd}
	Given a class $\mathcal{M}$ of metric spaces with \ballint algorithm $\mathcal{C}_\mathcal{M}$, a space $M \in \mathcal{M}$, and $\varepsilon \in (0,1)$,  a \emph{$(\mathcal{C}_\mathcal{M},\varepsilon)$-scattering sequence} is a sequence $(x_1,p_1,\delta_1),\dots,(x_\ell,p_\ell,\delta_\ell)$, where $\ell$ is some positive integer, and for $i \in [\ell], x_i \in \fac$, $p_i \in \cli$ and $\delta_i \in \mathbb{R}_+$ such that
	\begin{itemize}
		\item (Covering by $\mathcal{C}_\mathcal{M}$)\quad $x_i = \mathcal{C}_\mathcal{M}(M,\{(p_1,\delta_1),\dots,(p_{i-1},\delta_{i-1})\},\varepsilon/2)$ \quad $\forall 2 \le i \le \ell$
		\item ($\varepsilon$-refutation)\quad\quad\quad $\d(x_i,p_i) > (1+\varepsilon)\delta_i$ \qquad $\forall i \in [\ell]$
	\end{itemize}
	The \emph{algorithmic $(\varepsilon,\mathcal{C}_\mathcal{M})$-scatter dimension} of $\mathcal{M}$ is $\lambda_{\mathcal{M}}(\varepsilon)$ if any $(\mathcal{C}_\mathcal{M},\varepsilon)$-scattering sequence contains at most $\lambda_{\mathcal{M}}(\varepsilon)$ many triples per radius value. The \emph{algorithmic $\varepsilon$-scatter dimension} of $\mathcal{M}$ is the minimum algorithmic $(\varepsilon,\mathcal{C}_\mathcal{M})$-scatter dimension over any \ballint algorithm $\mathcal{C}_\mathcal{M}$ for $\mathcal{M}$.
\end{definition}

Although, algorithmic scatter dimension restricts the number of triples in the sequence with same radius value, we can use the proof technique from~\cite{AbbasiClustering23} to prove the following stronger guarantee. %Results marked with $\spadesuit$ can be found in the full version of the paper.
\begin{restatable}{lemma}{scdoint}\label{cor:scdonint} 
	Let $\mathcal{M}$ be a class of metric spaces of algorithmic $\varepsilon$-scatter dimension $\lambda(\varepsilon)$. Then there
	exists a \ballint\ algorithm $\mathcal{C}_{\mathcal{M}}$ with the following property. Given $\varepsilon \in (0,1)$, a constant $t\ge 1$, and $a_i >0, \tau_i\ge 2$ for $i \in [t]$, any $(\mathcal{C}_{\mathcal{M}}, \varepsilon)$-scattering contains $O(\sum_{i\in [t]}\lambda(\nicefrac{\varepsilon}{2}) (\log\tau_{i})/\varepsilon)$ many triples whose radii lie in the interval $\cup_{i \in [t]}[a_i, \tau_i a_i]$.
\end{restatable}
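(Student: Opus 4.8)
The plan is to adapt the geometric-bucketing argument of Abbasi et al.~\cite{AbbasiClustering23}. The point is that \Cref{def:algsd} only bounds the number of triples \emph{per radius value}, whereas here the radii are allowed to range over a union of $t$ intervals of bounded aspect ratio. The idea is to (i) partition each $[a_i,\tau_ia_i]$ into $O((\log\tau_i)/\varepsilon)$ geometric buckets of multiplicative width $\approx(1+\varepsilon/8)$; (ii) show that, after snapping every triple's radius to its bucket value, the whole sequence becomes a genuine $(\mathcal{C}',\varepsilon/2)$-scattering sequence for a suitably chosen underlying algorithm $\mathcal{C}'$, so that at most $\lambda(\varepsilon/2)$ triples can land in any one bucket; and (iii) sum $\lambda(\varepsilon/2)$ over the $O(\sum_i(\log\tau_i)/\varepsilon)$ buckets.

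The delicate point --- and what I expect to be the main obstacle --- is that one cannot snap the radii of an \emph{already-given} $(\mathcal{C}_\mathcal{M},\varepsilon)$-scattering sequence after the fact: the center $x_i$ was produced by $\mathcal{C}_\mathcal{M}$ from the \emph{unrounded} prefix $\delta_1,\dots,\delta_{i-1}$, so the covering rule of \Cref{def:algsd} would no longer hold with respect to the rounded radii. I would circumvent this by baking the rounding into $\mathcal{C}_\mathcal{M}$ itself. Fix a \ballint algorithm $\mathcal{C}'$ for $\mathcal{M}$ witnessing (up to constants absorbed by the final $O(\cdot)$) the algorithmic $(\varepsilon/2,\cdot)$-scatter dimension $\lambda(\varepsilon/2)$, and let $\mathcal{C}_\mathcal{M}(M,Q,\eta)$ first replace every $(p,\delta)\in Q$ by $(p,\rho_\eta(\delta))$, where $\rho_\eta(\delta)$ is the smallest integer power of $(1+\eta/4)$ that is $\ge\delta$, and then return $\mathcal{C}'$ on this rounded instance with error parameter $\eta/2$. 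This is a legitimate polynomial-time \ballint algorithm: radii only grow under $\rho_\eta$, so feasibility is preserved, and a center $x$ returned by $\mathcal{C}'$ satisfies $\d(x,p)\le(1+\eta/2)\rho_\eta(\delta)\le(1+\eta/2)(1+\eta/4)\delta\le(1+\eta)\delta$ for $\eta\le 1$, so the composed multiplicative error stays within $\eta$.

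With this $\mathcal{C}_\mathcal{M}$ in hand, take any $(\mathcal{C}_\mathcal{M},\varepsilon)$-scattering sequence $(x_1,p_1,\delta_1),\dots,(x_\ell,p_\ell,\delta_\ell)$ and put $v_i\coloneqq\rho_{\varepsilon/2}(\delta_i)$, a power of $(1+\varepsilon/8)$ with $\delta_i\le v_i\le(1+\varepsilon/8)\delta_i$. I would then check that $(x_1,p_1,v_1),\dots,(x_\ell,p_\ell,v_\ell)$ is a $(\mathcal{C}',\varepsilon/2)$-scattering sequence. The covering condition holds \emph{by construction}, since $x_i=\mathcal{C}_\mathcal{M}(M,\{(p_j,\delta_j)\}_{j<i},\varepsilon/2)$ unfolds to exactly $\mathcal{C}'(M,\{(p_j,v_j)\}_{j<i},\varepsilon/4)$, which is the covering rule of a $(\mathcal{C}',\varepsilon/2)$-scattering sequence; and the refutation condition survives because $\d(x_i,p_i)>(1+\varepsilon)\delta_i\ge\frac{1+\varepsilon}{1+\varepsilon/8}v_i>(1+\varepsilon/2)v_i$, using $(1+\varepsilon/2)(1+\varepsilon/8)<1+\varepsilon$ for $\varepsilon<1$ --- this ``slack'' is exactly what the grid fineness is engineered to leave. \Cref{def:algsd} then gives that at most $\lambda(\varepsilon/2)$ indices $i$ share any fixed value of $v_i$. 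Finally, a triple with $\delta_i\in[a_j,\tau_ja_j]$ has $v_i$ equal to one of the powers of $(1+\varepsilon/8)$ lying in $[a_j,(1+\varepsilon/8)\tau_ja_j]$, of which there are $O((\log\tau_j)/\varepsilon)$ since $\ln(1+\varepsilon/8)=\Theta(\varepsilon)$ and $\tau_j\ge 2$; multiplying by the per-value bound $\lambda(\varepsilon/2)$ and summing over $j\in[t]$ bounds the number of triples with radius in $\bigcup_j[a_j,\tau_ja_j]$ by $O\big(\sum_{j\in[t]}\lambda(\varepsilon/2)(\log\tau_j)/\varepsilon\big)$, as claimed. (For a fixed $\varepsilon$, as in our applications, one may let $\mathcal{C}_\mathcal{M}$ depend on $\varepsilon$; the only points requiring care are the two verifications highlighted above: legitimacy of the modified $\mathcal{C}_\mathcal{M}$, and the preservation of the refutation gap under rounding.)
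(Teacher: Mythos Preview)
Your argument is correct. The paper's proof takes the same route but packages it differently: it cites the single-interval bound (Proposition~\ref{lem:scdonint} from \cite{AbbasiClustering23}) as a black box---``there exists $\mathcal{C}_\mathcal{M}$ such that any $(\mathcal{C}_\mathcal{M},\varepsilon)$-scattering has $O(\lambda(\varepsilon/2)(\log\tau)/\varepsilon)$ triples with radii in $[a,\tau a]$''---and then simply sums this over the $t$ intervals. You instead re-derive that proposition inline via the rounding-into-$\mathcal{C}_\mathcal{M}$ trick and geometric bucketing, which is precisely the mechanism behind the cited result; the final summation over $j\in[t]$ is then identical. So the two proofs are the same in substance, yours being self-contained and the paper's being a two-line reduction to prior work.
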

\iffull
\begin{proof}
	We first need the following result from \cite{AbbasiClustering23}.
	
	\begin{proposition}[\cite{AbbasiClustering23}]\label{lem:scdonint} 
		Let $\mathcal{M}$ be a class of metric spaces of algorithmic $\varepsilon$-scatter dimension $\lambda(\varepsilon)$. Then there
		exists a \ballint\ algorithm $\mathcal{C}_{\mathcal{M}}$ with the following property. Given $\varepsilon \in (0,1), a >0$
		and $\tau\ge 2$, any $(\mathcal{C}_{\mathcal{M}}, \varepsilon)$-scattering contains $O(\lambda(\nicefrac{\varepsilon}{2}) (\log\tau)/\varepsilon)$ many triples whose radii lie in the interval $[a, \tau a]$.
	\end{proposition}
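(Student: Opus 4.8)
The plan is to reduce the interval bound to the ``one radius value at a time'' guarantee built into \Cref{def:algsd}: we engineer $\mathcal{C}_{\mathcal{M}}$ so that it internally snaps every requested radius onto a coarse geometric grid, so a scattering sequence whose radii all lie in $[a,\tau a]$ uses only $O((\log\tau)/\varepsilon)$ distinct grid values, and the definition applies to each. Concretely, fix once and for all a \ballint\ algorithm $\mathcal{C}^{*}$ for $\mathcal{M}$ witnessing the algorithmic $\varepsilon$-scatter dimension, so that for every $\varepsilon'\in(0,1)$ any $(\mathcal{C}^{*},\varepsilon')$-scattering sequence has at most $\lambda(\varepsilon')$ triples per radius value (here we read the defining minimum over ball-intersection algorithms as realized by a single algorithm, uniformly over scales). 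Define $\mathcal{C}_{\mathcal{M}}(M,Q,\eta)$: replace each $(p,\delta)\in Q$ by $(p,\delta')$, where $\delta'$ is the smallest integer power of $(1+\eta/3)$ that is $\ge\delta$, obtaining $Q'$, and return $\mathcal{C}^{*}(M,Q',\eta/2)$. Rounding radii \emph{up} only relaxes constraints, so any center feasible for $Q$ is feasible for $Q'$; and if $\mathcal{C}^{*}$ outputs a center $x$, then $\d(x,p)\le(1+\eta/2)\delta'\le(1+\eta/2)(1+\eta/3)\delta\le(1+\eta)\delta$ for every $(p,\delta)\in Q$, using $\eta<1$. Hence $\mathcal{C}_{\mathcal{M}}$ is a correct \ballint\ algorithm, running in time polynomial in $|M|$ and $1/\eta$.

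Next I would show that rounding a scattering sequence for $\mathcal{C}_{\mathcal{M}}$ produces a scattering sequence for $\mathcal{C}^{*}$ with a comparable parameter. Take any $(\mathcal{C}_{\mathcal{M}},\varepsilon)$-scattering sequence $(x_1,p_1,\delta_1),\dots,(x_\ell,p_\ell,\delta_\ell)$; note $\mathcal{C}_{\mathcal{M}}$ is always invoked here with error $\varepsilon/2$, hence rounds on the grid of powers of $(1+\varepsilon/6)$, giving $\delta_i\mapsto\delta_i'$. I claim $(x_1,p_1,\delta_1'),\dots,(x_\ell,p_\ell,\delta_\ell')$ is a $(\mathcal{C}^{*},\varepsilon/2)$-scattering sequence. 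For covering, $x_i=\mathcal{C}_{\mathcal{M}}(M,\{(p_j,\delta_j)\}_{j<i},\varepsilon/2)=\mathcal{C}^{*}(M,\{(p_j,\delta_j')\}_{j<i},\varepsilon/4)$, which is precisely $\mathcal{C}^{*}$ called with the error $\tfrac12\cdot\tfrac{\varepsilon}{2}$ that \Cref{def:algsd} prescribes at parameter $\varepsilon/2$. For refutation, $\d(x_i,p_i)>(1+\varepsilon)\delta_i\ge\frac{1+\varepsilon}{1+\varepsilon/6}\,\delta_i'\ge(1+\tfrac{\varepsilon}{2})\delta_i'$, the last step being a one-line computation valid for $\varepsilon<1$. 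By the choice of $\mathcal{C}^{*}$, the rounded sequence thus contains at most $\lambda(\varepsilon/2)$ triples for each fixed radius value.

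It remains to count. If $\delta_i\in[a,\tau a]$, then $\delta_i'$ is a power of $(1+\varepsilon/6)$ in $[a,(1+\varepsilon/6)\tau a]$, and there are only $\log_{1+\varepsilon/6}\bigl((1+\varepsilon/6)\tau\bigr)+1=O\bigl((\log\tau)/\varepsilon\bigr)$ such powers. Since rounding leaves the triples themselves unchanged (only their recorded radii), the triples of the original sequence with radius in $[a,\tau a]$ inject into the rounded sequence and distribute among these $O((\log\tau)/\varepsilon)$ radius values, at most $\lambda(\varepsilon/2)$ per value; hence their number is $O\bigl(\lambda(\varepsilon/2)(\log\tau)/\varepsilon\bigr)$, as claimed.

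I expect the crux to be the error-budget bookkeeping: one fixed algorithm $\mathcal{C}_{\mathcal{M}}$ must simultaneously meet the $(1+\eta)$-feasibility requirement of \Cref{def:ball-int} for all $\eta$ and, at the specific error $\varepsilon/2$ that \Cref{def:algsd} feeds it, emit rounded sequences that still obey $\varepsilon/2$-refutation after the radii were inflated by the grid. This is exactly where the choices ``grid granularity $\eta/3$'' and ``inner call with error $\eta/2$'' are forced, and they are what pin down $\lambda(\varepsilon/2)$ rather than merely $\lambda(\Theta(\varepsilon))$. A secondary point worth stating is that \Cref{def:algsd} should be understood as furnishing a single ball-intersection algorithm $\mathcal{C}^{*}$ valid uniformly across all parameters $\varepsilon'$, so that the single $\mathcal{C}_{\mathcal{M}}$ we construct inherits the bound for every $\varepsilon$.
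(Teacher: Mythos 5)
Your proof is correct. The paper itself does not prove this proposition---it is imported from \cite{AbbasiClustering23} as a black box---and your reconstruction (wrapping a witness algorithm $\mathcal{C}^{*}$ so that requested radii are snapped up to a geometric grid of powers of $1+\Theta(\varepsilon)$, then applying the per-radius-value bound of \Cref{def:algsd} to each of the $O((\log\tau)/\varepsilon)$ grid values meeting $[a,\tau a]$) is essentially the discretization argument used in that source, and your error bookkeeping (grid granularity $\eta/3$, inner call with error $\eta/2$, refutation degrading from $1+\varepsilon$ to $1+\varepsilon/2$) checks out for $\varepsilon<1$. The one caveat is the point you already flag: \Cref{def:algsd} takes a per-$\varepsilon$ minimum over \ballint\ algorithms, so one must read it (as \cite{AbbasiClustering23} and this paper implicitly do) as furnishing a single $\mathcal{C}^{*}$ witnessing $\lambda(\varepsilon')$ uniformly in $\varepsilon'$, since the proposition's quantifier order demands one algorithm valid for all $\varepsilon$.
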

	
	Now, by \Cref{lem:scdonint}, we have that there exists  $\ballintalg$\ algorithm $\mathcal{C}_{\mathcal{M}}$
	such that any $(\mathcal{C}_{\mathcal{M}}, \varepsilon)$-scattering contains $O(\lambda(\nicefrac{\varepsilon}{2}) (\log\tau_{i})/\varepsilon)$ many triples whose radii lie in the interval $[a_i, \tau_i a_i]$,  for $i \in [t]$. Therefore, any $(\mathcal{C}_{\mathcal{M}}, \varepsilon)$-scattering contains $\sum_{i\in[t]}O(\lambda(\nicefrac{\varepsilon}{2}) (\log\tau_{i})/\varepsilon)$ many triples whose radii lie in the interval $\cup_{i\in[t]}[a_i, \tau_i a_i]$.
\end{proof}
\fi

\section{Bicriteria FPT Approximation Scheme} \label{sec:algorithm}

\subsection{Algorithm} \label{subsec:alg}
\begin{algorithm}[h]
	\caption{Approximation Scheme for \probname}
	\label{algo:apx}
	\begin{algorithmic}[1]
		\Statex \textbf{Input:} Instance $\cI= ((\cli,\fac, \d), k, r)$ of \probname, $\varepsilon \in (0, 1)$, and \textsc{Ball Intersection} algorithm \ballintalg, and a guess $\mathcal{G}$ for $\OPT_r$
		\Statex \textbf{Output:} A solution $X \subseteq \fac$ such that $\cost_{(1+\varepsilon)r}(\cli, X) \le (1+\varepsilon) \mathcal{G}$, assuming $\OPT_r \le \mathcal{G} \le (1+\varepsilon/3) \cdot \OPT_r$. 
		\State For each $p \in \cli$, compute $u(p)= 3\cdot \min\{\alpha>r : |\ball(p,\alpha)| \ge \mathcal{G}/\alpha \}$\label[lin]{lin:ub}
		\State Process $P$ in non-decreasing order of $u(p)$ and mark $p_i \in P$ if $\ball(p_i,u(p_i))$ is disjoint from $\ball(p_j,u(p_j))$ for every marked $p_j$ such that $j <i$ \label[lin]{lin:mis}
		\State Let $p^{(1)}, \dots,p^{(k')}$ be the marked points \label[lin]{lin:initpoints}
		% \State $(u, \cli', X) \gets $\texttt{InitialSolution}$(P, \d, r)$ \bluecomment{Compute initial upper bounds and solution}
		\State \textbf{for each} $i \in [k']$, let $Q_i \coloneqq \LR{(p^{(i)}, u(p^{(i)}))}$\label{lin:ubs} 
		\State For $k\ge i >k'$, let $Q_i =  \emptyset$ \label{lin:emptyub}
		\State Let $X \coloneqq (x_1, \ldots, x_k)$, where $\forall i \in [k]$, $x_i \in \fac$ is any center satisfying requests in $Q_i$. \label[lin]{lin:initsolution}
		%\State $i \gets 0$
		\State Let $r' \coloneqq r(1+\sfrac{\varepsilon}{3})$. \label{lin:rprime}
		\While{$\cost_{r'}(P, X) > (1+\varepsilon) \cdot \mathcal{G}$} \label{lin:whilestart}
		
		\State Toss a fair coin to guess whether we are in ``nearby witness'' or ``faraway witness'' case
		\If{we guess ``nearby witness'' case} \label[lin]{lin:cointoss} \label[lin]{lin:if}
		\State $N = \LR{p \in P: \d(p, X) \le \frac{8r}{\varepsilon} }$ \label{lin:setN}
		\State Sample a point $p \in N$, where $\Pr(p = a) = \frac{\d_{r'}(a, X_i)}{\sum_{b \in N} \d_{r'}(b, X_i)}$ for each $a \in \cli$ \label[lin]{lin:sample1}
		\Else{ we guess ``faraway witness case \textbf{then}} \label[lin]{lin:else}
		\State Let $A \coloneqq \LR{ p \in P : \d_{r'}(p, X) > \frac{\varepsilon}{1000k} \cdot u(p) }$\label[lin]{lin:setA}
		\State Sample a point $p \in A$, where $\Pr(p = a) = \frac{\d_{r'}(a, X_i)}{\sum_{b \in \cli} \d_{r'}(b, X_i)}$ for each $a \in A$ \label[lin]{lin:sample2}
		\EndIf
		\State Sample an integer $i \in [k]$ u.a.r. \label[lin]{lin:clusterindex}
		\State Add $(p, \delta_p)$ to $Q_i$, where $\delta_p = \frac{\d(p, X)}{1+\sfrac{\varepsilon}{12}}$ \label{lin:add request} 
		\State $x_i \gets \ballintalg (Q_i, \fac,\sfrac{\varepsilon}{40})$ \textbf{if} no $x_i$ was found then \textbf{fail} \label[lin]{lin:ballalgo}
		\EndWhile \label{lin:whileend}
	\end{algorithmic}
\end{algorithm}

Our bicrteria \fptas for \probname is formally stated in \Cref{algo:apx}. As an input, we are given an instance $\mathcal{I} = ((\cli, \fac, \d), k, r)$ of \probname, an accuracy parameter $\varepsilon$, access to an algorithm \ballintalg for the so-called ``{\sc Ball Intersection}'' problem (discussed later), and a guess $\mathcal{G}$ for the optimal cost $\OPT_r$. By a standard exponential search, we will assume that $\OPT_r \le \mathcal{G} \le (1+\varepsilon/3) \cdot \OPT_r$. At a high level, this algorithm can be divided into two steps: \emph{initialization phase} and the \emph{iterative cost-improvement phase}. The initialization phase spans from line \ref{lin:ub} to \ref{lin:rprime}. 

At a high level, the goal of this phase to compute for each point $p \in P$, an \emph{upper bound} $u(p)$, such that $p$ must have an optimal center within distance $u(p)$. Once we find such upper bounds, we use a subset of them to initialize for each $i \in [k]$, a set of \emph{requests} $Q_i$, where each request $(p, \delta_p)$ demands that the $i$th center in the solution must be within distance at most $\delta_p$ from $p$ in every subsequent solution found by the algorithm.

Now, the algorithm moves to the iterative cost-improvement phase in lines \ref{lin:whilestart} to \ref{lin:whileend}, consisting of a while loop that runs as long as the current solution has not become the bicriteria approximation that we are looking for. Thus, in each iteration of the while loop, we are given a solution $X$ that satisfies all the requests $Q_i$, and yet satisfies $\cost_{r'}(P, X) > (1+\varepsilon) \cdot \mathcal{G}$. Then, our algorithm makes a random choice whether there is enough cost-contribution of nearby witnesses, or of faraway witnesses---here a witness is a point $p$ whose distance to $X$ is sufficiently larger than that in the (unknown) optimal solution $O$. In each of the cases, we sample points from carefully defined sets (cf. $N$ in line \ref{lin:setN} and $A$ in \ref{lin:setA}), proportional to their contribution to the cost to the respective sets (in the analysis, we will argue that with good probability, we will in fact sample witness points). Having sampled such a point $p$, we guess the index $i$ of the optimal cluster of $p$ in line \ref{lin:clusterindex}. Finally, assuming $p$ is indeed a witness, we add a request $(p, \frac{\d(p, X)}{1+\varepsilon/12})$ to the $i$th request set $Q_i$, and in line \ref{lin:ballalgo} we recompute $x_i$ using \ballintalg. This algorithm either returns a center $x_i \in \fac$ that satisfies all the requests (with an error of up to a factor of $\varepsilon/40$), or correctly outputs that there is no point in $\fac$ satisfying all the requests. Note that in discrete metric spaces, \ballintalg can be simulated by simply scanning each $x \in \fac$ and checking whether it satisfies all the requests in $Q_i$. On the other hand, for continuous Euclidean spaces, such an algorithm was designed in \cite{AbbasiClustering23}. If the algorithm does not fail at this step, then we update our set $X$ and continue to the next iteration. In the rest of the section, we will prove that the algorithm returns a $(1+\varepsilon, 1+\varepsilon)$-bicriteria approximation with good probability.

\subsection{Analysis}
Throughout the analysis, we assume that we are given a guess $\mathcal{G}$, such that $\OPT_r \le \mathcal{G} \le (1+\varepsilon/3)\cdot \OPT_r$. We divide the analysis in two parts. In the first part, we bound the running time of the algorithm using the following lemma. The proof of this lemma is present in Section~\ref{ss:runtime}.
\begin{lemma}\label{lem:runtime}
	Algorithm~\ref{algo:apx} terminates in $O(\frac{k}{\varepsilon}\log (\frac{k}{\varepsilon})\lambda(\frac{\varepsilon}{40}))$ iterations---with or without failure.
\end{lemma}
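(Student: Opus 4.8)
The plan is to track the total number of requests ever added to the constraint sets $Q_1,\dots,Q_k$ across all iterations, since each iteration of the while loop adds exactly one request to one $Q_i$ (plus the $O(k)$ requests added during initialization). So it suffices to upper-bound $\sum_{i\in[k]}|Q_i|$ at termination. The key structural observation, already hinted at in the technical overview, is that for each fixed cluster index $i$, the sequence of triples $(x_i^{(1)},p_i^{(1)},\delta_i^{(1)}),\dots$ recorded when requests are added to $Q_i$ — where $x_i^{(j)}$ is the center held just before request $j$ is added — forms a $(\mathcal{C}_\mathcal{M},\varepsilon')$-scattering sequence in the sense of \Cref{def:algsd}, for a suitable $\varepsilon'$ (here morally $\varepsilon/12$ versus the $\varepsilon/40$ error of \ballintalg, so the refutation/covering constants need to be checked to line up). The covering condition holds because $x_i^{(j)}$ is precisely the output of \ballintalg on the earlier requests; the $\varepsilon'$-refutation condition holds because when we add request $(p,\delta_p)$ with $\delta_p = \d(p,X)/(1+\varepsilon/12)$, the current center $x_i$ is at distance $\d(p,X) > (1+\varepsilon/12)\delta_p$ from $p$ (using that we only add $p$ when $p$ is far from the current $X$, hence from $x_i$ in particular).

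Next I would invoke \Cref{cor:scdonint} rather than the raw scatter-dimension bound: the point is that the radii $\delta_p = \d(p,X)/(1+\varepsilon/12)$ of the requests added in the two branches of the algorithm lie in a union of a bounded number of intervals, each of bounded aspect ratio. Concretely, in the ``nearby witness'' branch, the sampled point satisfies $(1+\varepsilon)r \le \d(p,X) \le 8r/\varepsilon$ (the lower bound because its $r'$-distance is positive, the upper bound by the defining condition of $N$ in \cref{lin:setN}), so these radii live in one interval of aspect ratio $O(1/\varepsilon)$. In the ``faraway witness'' branch, the condition defining $A$ in \cref{lin:setA} together with the fact that the maintained solution $X$ always satisfies the upper-bound requests $Q_i$ (so $\d(p,X)$ is at most roughly $u(p)$ up to constants, while $\d_{r'}(p,X) > \frac{\varepsilon}{1000k}u(p)$ forces $\d(p,X) \ge \frac{\varepsilon}{1000k}u(p)$) pins $\delta_p$ into an interval of aspect ratio $\mathrm{poly}(k/\varepsilon)$ relative to $u(p)$ — and crucially the $u(p)$ of the initial marked points, hence the relevant $a_i$'s, can themselves be shown to take only a bounded number of distinct ``scales'' because of the marking/packing step in \cref{lin:mis}. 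Applying \Cref{cor:scdonint} with $t = O(1)$ (or a small function of $k$) intervals then bounds the number of triples per $Q_i$ by $O(\frac{1}{\varepsilon}\lambda(\varepsilon/2)\log(k/\varepsilon))$, and multiplying by $k$ clusters gives the claimed $O(\frac{k}{\varepsilon}\log(\frac{k}{\varepsilon})\lambda(\frac{\varepsilon}{40}))$ bound on the number of iterations.

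The main obstacle I anticipate is the bookkeeping around the ``faraway'' branch: showing that the radii there really do lie in a bounded-aspect-ratio interval requires (a) establishing the invariant that the algorithm's current solution $X$ always satisfies every request currently in every $Q_i$ — in particular the initialization upper-bound requests $(p^{(i)},u(p^{(i)}))$ never get dropped, so $\d(p^{(i)},x_i) \le (1+\varepsilon/40)u(p^{(i)})$ throughout — and then (b) propagating this to an upper bound on $\d(p,X)$ for an arbitrary faraway-sampled $p$ via the upper bound $u(p)$ and the packing structure of the marked points. One must also verify that the $\varepsilon/40$ multiplicative slack of \ballintalg does not accumulate across iterations (it does not, because each new center is recomputed from scratch on the full current $Q_i$, not incrementally). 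The scatter-dimension argument and the interval-counting via \Cref{cor:scdonint} are then essentially a black-box application; the delicate part is purely the geometric/combinatorial control of the request radii. A secondary subtlety is making sure the failure branches (failing to sample a witness, or \ballintalg reporting infeasibility) are counted as terminating events so that ``with or without failure'' the iteration bound holds.
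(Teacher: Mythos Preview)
Your overall architecture matches the paper's: show each $Q_i$ induces a $(\ballintalg,\Theta(\varepsilon))$-scattering sequence, argue the request radii fall into a constant number of intervals of aspect ratio $\poly(k/\varepsilon)$, and invoke \Cref{cor:scdonint}. The nearby-witness case is handled exactly as in the paper.

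The gap is in your treatment of the faraway branch. You correctly observe that for a faraway-sampled $p$ one has $\frac{\varepsilon}{1000k}u(p) \lesssim \delta_p \lesssim u(p)$, so $\delta_p$ lives in a bounded-ratio interval \emph{around $u(p)$}. But $u(p)$ varies with the sampled point, so this does not yet give a single interval per $Q_i$. Your proposed fix---that ``the $u(p)$ of the initial marked points \ldots\ take only a bounded number of distinct scales because of the marking/packing step''---does not work: the packing in \cref{lin:mis} only makes the balls $\ball(p^{(i)},u(p^{(i)}))$ disjoint, it imposes no bound on the ratio between different $u(p^{(i)})$'s, and in any case the sampled point $p$ need not be a marked point at all. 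So the claim that the faraway radii live in $O(1)$ (or even $O(k)$) fixed intervals is unsupported.

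The paper closes this gap differently, via a \emph{relative} argument internal to each $Q_i$ (Claim~\ref{cl:bndscseq}): take any two faraway requests $(p,\delta_p),(p',\delta_{p'}) \in Q_i^f$ and suppose $\delta_{p'} \ll \frac{\varepsilon^2}{k}\delta_p$. If the algorithm did not fail, some center $x_i$ satisfies both, so $\d(p,p') \le (1+\varepsilon/40)(\delta_p+\delta_{p'})$. Combining this with the invariant $\d(p',X) \le 3.1\,u(p')$ (\Cref{lem:3apxsol}) and the faraway condition $u(p') = O(k/\varepsilon)\,\delta_{p'}$, one bounds $\d(p,X)$ at the moment $(p,\delta_p)$ was added by strictly less than $(1+\varepsilon/12)\delta_p$, contradicting the definition of $\delta_p$. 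Hence all faraway radii in $Q_i$ lie in $[r_{\min}, O(k/\varepsilon^2)\,r_{\min}]$ where $r_{\min}$ is the smallest such radius---a single interval per $Q_i$, with no reference to any global scale. This is the missing idea in your plan.
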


In the second part, we show the following lemma, which says that the probability that the algorithm terminates without failure is high. The proof the lemma is present in Section~\ref{ss:successprob}.
\begin{lemma}\label{lem:successprob}
	With probability at least \sucprob, Algorithm~\ref{algo:apx} terminates without failure, i.e., returns a solution $X$ satisfying $\cost_{(1+\varepsilon/3)r}(P, X) \le (1+\varepsilon) \OPT_r$.
\end{lemma}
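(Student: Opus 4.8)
The plan is to run a union bound over the at most $T=O(\tfrac{k}{\varepsilon}\log(\tfrac{k}{\varepsilon})\lambda(\tfrac{\varepsilon}{40}))$ iterations guaranteed by \Cref{lem:runtime}. Fix an optimal solution $O=(o_1,\dots,o_k)$, so $\cost_r(P,O)=\OPT_r\le\mathcal{G}$. For each iteration $t$ I will define a \emph{success event} $\mathcal{E}_t$: the fair coin guesses the ``right'' one of the two witness cases, the point $p$ sampled in that case is a \emph{witness} (meaning $\d_{r'}(p,X)>(1+\varepsilon/12)\,\d_r(p,O)$), and the index $i\in[k]$ sampled on \cref{lin:clusterindex} is the optimal cluster of $p$. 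I will then show (a) $\Pr[\mathcal{E}_t\mid\text{iteration }t\text{ reached without prior failure}]\ge g(k,\varepsilon)$ for some $g(k,\varepsilon)=\varepsilon^{O(1)}/k$, and (b) if all of $\mathcal{E}_1,\dots,\mathcal{E}_T$ occur then the algorithm never reaches \textbf{fail}, and hence (by \Cref{lem:runtime}) terminates with a solution meeting the loop exit condition $\cost_{r'}(P,X)\le(1+\varepsilon)\mathcal{G}$, which with $\mathcal{G}\le(1+\varepsilon/3)\OPT_r$ gives the stated guarantee after absorbing constants into $\varepsilon$. Chaining the conditional bounds yields $\Pr[\text{no failure}]\ge g(k,\varepsilon)^{T}$, which is of the form claimed in the lemma.

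For (b) I will use two ingredients. First, \emph{validity of the upper bounds}: $\d(p,O)\le u(p)$ for every $p\in P$. Writing $\alpha^\ast=\min\{\alpha>r:|\ball(p,\alpha)|\ge\mathcal{G}/\alpha\}$ so $u(p)=3\alpha^\ast$, if $\d(p,O)>3\alpha^\ast$ then each of the $\ge\mathcal{G}/\alpha^\ast$ points of $\ball(p,\alpha^\ast)$ lies at distance $>2\alpha^\ast>2r$ from $O$ and so contributes more than $\alpha^\ast$ to $\cost_r(P,O)$, forcing $\OPT_r>\mathcal{G}$, a contradiction; this also yields $k'\le k$ and lets me index $O$ so that $o_i$ is the optimal center of $p^{(i)}$ for $i\le k'$ (using disjointness of the marked $u$-balls to see distinct marked points lie in distinct optimal clusters). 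Second, the \emph{inflated-radius trick}: if $p$ is a witness then $\d(p,X)>r'$ and $\d(p,X)-r'>(1+\varepsilon/12)(\d(p,O)-r)$, and since $r'-(1+\varepsilon/12)r=r\varepsilon/4\ge0$ we get $\d(p,X)\ge(1+\varepsilon/12)\,\d(p,O)$, hence $\delta_p=\d(p,X)/(1+\varepsilon/12)\ge\d(p,O)$. Together these give, by induction over iterations, that whenever all cluster guesses are correct the center $o_i$ satisfies \emph{every} request in $Q_i$ (the initial one by upper-bound validity, each later one by the inflated-radius trick), so the call $\ballintalg(Q_i,\fac,\varepsilon/40)$ always returns a center and the algorithm never fails.

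For (a), fix an iteration $t$ with current solution $X$ (satisfying all $Q_i$) and $\cost_{r'}(P,X)>(1+\varepsilon)\mathcal{G}$, and let $W=\{p\in P:\d_{r'}(p,X)>(1+\varepsilon/12)\,\d_r(p,O)\}$. Since every non-witness contributes at most $(1+\varepsilon/12)\d_r(p,O)$ to $\cost_{r'}(P,X)$, summing gives $\sum_{p\in W}\d_{r'}(p,X)\ge(1+\varepsilon)\mathcal{G}-(1+\varepsilon/12)\OPT_r=\Omega(\varepsilon)\mathcal{G}$. Partition $W$ by distance to $X$ into the nearby witnesses $W\cap N$ and the faraway ones; using that $\d(p,X)=O(u(p))$ holds throughout (because $X$ always satisfies the initial upper-bound requests up to the $\varepsilon/40$ error, and every point is within $O(u(p))$ of some marked point) together with the definition of $A$, I will show that the faraway witnesses lying outside $A$ carry only an $o(1)$ fraction of $\Omega(\varepsilon)\mathcal{G}$, so $\max\{\sum_{p\in W\cap N}\d_{r'}(p,X),\ \sum_{p\in W\cap A}\d_{r'}(p,X)\}\ge\Omega(\varepsilon)\mathcal{G}$. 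The coin selects the case that has the mass with probability $\tfrac12$; conditioned on this, since the non-witnesses inside $N$ (resp.\ $A$) contribute in total at most $(1+\varepsilon/12)\OPT_r=O(\mathcal{G})$, a sample drawn proportionally to $\d_{r'}(\cdot,X)$ from that set is a witness with probability at least $\tfrac{\Omega(\varepsilon)\mathcal{G}}{\Omega(\varepsilon)\mathcal{G}+O(\mathcal{G})}=\Omega(\varepsilon)$; finally the uniformly chosen $i\in[k]$ equals the optimal cluster of that witness with probability $\tfrac1k$. Multiplying, $\Pr[\mathcal{E}_t\mid\cdots]\ge\tfrac12\cdot\Omega(\varepsilon)\cdot\tfrac1k=\Omega(\varepsilon/k)=:g(k,\varepsilon)$, and combining with \Cref{lem:runtime} as above completes the argument.

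I expect the main obstacle to be the faraway case of (a): restricting the sample to $A$ is what keeps the radii $\delta_p$ of the faraway requests within a bounded interval, which is needed for \Cref{cor:scdonint} (hence for \Cref{lem:runtime}), but one must then argue this restriction loses essentially none of the faraway-witness cost mass. This is where the tension between the $r'$-distances that define witnesses and the true distances that define the posted requests --- the ``inflated-radius barrier'' of Fomin et al.\ --- is sharpest, and it goes through only because the constants are tuned so that all the inequalities chain: the $\varepsilon/3$ radius inflation, the $\varepsilon/12$ shrinking of requests, the $8r/\varepsilon$ near/far threshold in $N$, and the $\varepsilon/(1000k)$ slack in $A$. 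A secondary subtlety is the consistent injective assignment of the $k'$ marked points to optimal clusters needed to start the induction in (b).
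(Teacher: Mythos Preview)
Your overall architecture matches the paper's: define consistency with a fixed optimal $O$, show the initial state is consistent, show each iteration preserves consistency with probability $\Omega(\varepsilon/k)$, and chain over the $O(\tfrac{k}{\varepsilon}\log(\tfrac{k}{\varepsilon})\lambda(\varepsilon/40))$ iterations from \Cref{lem:runtime}. Part (b) and the inflated-radius calculation are correct (the paper phrases witnesses with factor $1+\varepsilon/3$ rather than $1+\varepsilon/12$, but your version works too). The nearby branch of (a) is also fine.

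The gap is exactly where you anticipated, in the faraway branch of (a). Your proposed justification---``using that $\d(p,X)=O(u(p))$ holds throughout, together with the definition of $A$''---does not establish that faraway witnesses outside $A$ carry negligible mass. The bound $\d(p,X)\le 3.1\,u(p)$ is an \emph{upper} bound on $\d(p,X)$, whereas for $p\notin A$ you only know $\d_{r'}(p,X)\le \tfrac{\varepsilon}{1000k}u(p)$; to sum this you would need control on $\sum_p u(p)$, and no such bound (of order $O(k\,\mathcal{G})$) is available---$u(p)$ can be as large as $\Theta(\mathcal{G})$ for each of $n$ points. So the restriction to $A$ cannot be justified this way.

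The paper's argument here is genuinely different and more delicate. Instead of bounding the mass \emph{outside} $A$, it exhibits a subset of faraway witnesses that is provably \emph{inside} $A$ and still carries $\Omega(\varepsilon)$ of $C_P$. For each center $x_j$ whose faraway-witness mass $C_{W_{j,{\sf far}}}$ is at least $\tfrac{\varepsilon}{100k}C_P$, sort $W_{j,{\sf far}}$ by distance to $x_j$ and split at the median contribution into $W^-_{j,{\sf far}}$ and $W^+_{j,{\sf far}}$. For any $p\in W^+_{j,{\sf far}}$, all of $W^-_{j,{\sf far}}$ lies in $\ball(p,\,2\d(p,x_j))$ and has $|W^-_{j,{\sf far}}|\ge \tfrac{\varepsilon\,\OPT_r}{200k\,\d_{r'}(p,x_j)}$ (because its total $\d_{r'}$-contribution is large but each term is at most $\d_{r'}(p,x_j)$). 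Feeding this density into the definition of $u(p)$ yields $u(p)\le \tfrac{900k}{\varepsilon}\d(p,x_j)$, which forces $p\in A$ (\Cref{cl:wplusinA}). Summing $C_{W^+_{j,{\sf far}}}$ over heavy $j$ gives the $\Omega(\varepsilon)C_A$ mass you need inside $A$. This use of the lower half $W^-$ to bound $u(p)$ for the upper half is the missing idea; your sketch should be amended to include it.
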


Using these two lemmas and repeating the algorithm $\exp\lr{O\lr{\frac{k}{\varepsilon} \log(\frac{k}{\varepsilon}) \lambda(\frac{\varepsilon}{40}) }}$ times, we have our main result.
\begin{theorem}[Main Theorem]\label{thm:main}
	Let $\mathcal{M}$ be a class of metric spaces closed under scaling distances by a positive constant. There is a randomized algorithm that takes as input an instance  $\cI=((\cli, \fac, \d), k, r)$ of \probname such that $(\cli,\fac,\d) \in \mathcal{M}$ and  $\varepsilon \in (0,1)$, and outputs a $(1+\varepsilon,1+\varepsilon)$-bicriteria solution for $\cI$ when, for all $\varepsilon'>0$, the algorithmic $\varepsilon'$-scatter dimension of $\mathcal{M}$ is bounded by $\lambda(\varepsilon')$, for some function $\lambda$. The running time of the algorithm is $2^{O(\frac{k}{\varepsilon}\cdot\log (k/\varepsilon)\cdot \lambda(\varepsilon/40))} \cdot |{\cI}|^{O(1)}$.
	
\end{theorem}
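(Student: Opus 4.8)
The plan is to obtain \Cref{thm:main} by combining the two lemmas already stated --- \Cref{lem:runtime}, that every run of \Cref{algo:apx} halts, with or without failure, after $T := O(\tfrac{k}{\varepsilon}\log(\tfrac{k}{\varepsilon})\lambda(\tfrac{\varepsilon}{40}))$ iterations, and \Cref{lem:successprob}, that a single run on a good guess $\mathcal{G}$ (one with $\OPT_r \le \mathcal{G} \le (1+\varepsilon/3)\OPT_r$) halts without failure with probability at least \sucprob\ and then returns a size-$\le k$ set $X$ with $\cost_{(1+\varepsilon/3)r}(P,X) \le (1+\varepsilon)\OPT_r$ --- together with three routine wrappers: fixing the Ball Intersection oracle, amplifying the success probability, and removing the assumption that a good guess $\mathcal{G}$ is available.

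First, the oracle. Since the algorithmic $\varepsilon'$-scatter dimension of $\mathcal{M}$ is at most $\lambda(\varepsilon')$ for every $\varepsilon' \in (0,1)$, \Cref{cor:scdonint} (built on \Cref{lem:scdonint}) produces a concrete \ballint algorithm $\mathcal{C}_\mathcal{M}$ that runs in time polynomial in $|M|$ and $1/\eta$ and along which scattering sequences stay short over any union of boundedly many bounded-aspect-ratio radius intervals; this is exactly the algorithm with respect to which \Cref{lem:runtime} and \Cref{lem:successprob} are proved, so we run \Cref{algo:apx} using $\mathcal{C}_\mathcal{M}$ as its Ball Intersection oracle. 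The hypothesis that $\mathcal{M}$ is closed under scaling distances enters implicitly here: the arguments behind those two lemmas normalize the radii of accumulated requests to a common scale, and need the rescaled copies of the input metric to remain in $\mathcal{M}$ so that both $\lambda(\cdot)$ and $\mathcal{C}_\mathcal{M}$ keep applying. With this choice, each of the $\le T$ iterations performs only sampling in $P$, a uniform draw from $[k]$, and one call to $\mathcal{C}_\mathcal{M}$ with error $\varepsilon/40$, while the initialization phase is a polynomial-time computation of the values $u(p)$ plus a greedy marking step; hence one run costs $f(k,\varepsilon)\cdot|\cI|^{O(1)}$ with $f(k,\varepsilon) = \poly(k,\tfrac1\varepsilon,\lambda(\tfrac{\varepsilon}{40}))$. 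Moreover, by \Cref{lem:successprob} a non-failing run on a good guess returns a size-$\le k$ set $X$ with $\cost_{(1+\varepsilon)r}(P,X) \le \cost_{(1+\varepsilon/3)r}(P,X) \le (1+\varepsilon)\OPT_r$, i.e.\ a valid $(1+\varepsilon,1+\varepsilon)$-bicriteria solution.

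Next, amplification and the guess. Write $p$ for the lower bound \sucprob\ of \Cref{lem:successprob}. For each candidate value of $\mathcal{G}$, run \Cref{algo:apx} independently $N := \Theta(\tfrac1p \log|\cI|)$ times; every such run halts by \Cref{lem:runtime}, so we may collect the outputs of those that did not fail, and for a fixed good $\mathcal{G}$ at least one run is non-failing except with probability $(1-p)^N \le e^{-pN} = |\cI|^{-\Omega(1)}$. The candidate guesses come from the standard exponential search over $\mathcal{G}$ indicated in \Cref{subsec:alg}: $\mathcal{G} = 0$ (a good guess when $\OPT_r = 0$) together with $\mathcal{G} = (1+\varepsilon/3)^j$ for the $|\cI|^{O(1)}$ many integers $j$ bracketing $\OPT_r$, so that when $\OPT_r > 0$ some candidate lies in $[\OPT_r,(1+\varepsilon/3)\OPT_r]$. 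Finally, output the collected $X$ minimizing $\cost_{(1+\varepsilon)r}(P,\cdot)$ (and an arbitrary size-$k$ subset of $\fac$ if nothing was collected): with high probability the collected list contains a set meeting $\cost_{(1+\varepsilon)r}(P,\cdot) \le (1+\varepsilon)\OPT_r$, and every collected set has size $\le k$, so the minimizer is a $(1+\varepsilon,1+\varepsilon)$-bicriteria solution. The total running time is $(\#\text{guesses})\cdot N \cdot f(k,\varepsilon)\cdot|\cI|^{O(1)} = 2^{O(\frac{k}{\varepsilon}\log(k/\varepsilon)\lambda(\varepsilon/40))}\cdot|\cI|^{O(1)}$, the polynomial and logarithmic overheads being absorbed.

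The main obstacle, which the assembly above sidesteps by invoking \Cref{lem:runtime} and \Cref{lem:successprob} as black boxes, is entirely inside those two lemmas. The substantive points there are: showing that the witness-sampling loop makes progress that the algorithmic scatter dimension can charge against, even though the radii of the accumulated requests in each $Q_i$ straddle the separate ``nearby witness'' and ``faraway witness'' ranges rather than a single bounded-aspect-ratio interval (this is where \Cref{cor:scdonint}, with a constant number of radius intervals, is invoked, after bounding each range's aspect ratio); and showing that the fair-coin branch between those two cases, followed by sampling proportional to $(1+\varepsilon/3)r$-distances within the appropriate set ($N$ or $A$), lands on a genuine witness with the claimed probability, notwithstanding that inflated-radius distances can be far below true distances when the latter are close to $r$. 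Conditional on those two lemmas, the steps above are bookkeeping: running-time accounting, probability boosting, and the exponential search for $\OPT_r$.
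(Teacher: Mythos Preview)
Your proposal is correct and follows the same approach as the paper: the paper's own proof of \Cref{thm:main} is a single sentence that combines \Cref{lem:runtime} and \Cref{lem:successprob} and repeats \Cref{algo:apx} $\exp\lr{O\lr{\frac{k}{\varepsilon}\log(\frac{k}{\varepsilon})\lambda(\frac{\varepsilon}{40})}}$ times, with the exponential search for $\mathcal{G}$ already declared a standard preprocessing step in \Cref{subsec:alg}. You have simply spelled out the routine wrappers (fixing the oracle via \Cref{cor:scdonint}, amplification, guess enumeration, and output selection) that the paper leaves implicit.
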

\subsubsection{Bounding runtime using Algorithmic Scatter Dimension}\label{ss:runtime}

First, we show some properties of the initial upper bounds (line~\ref{lin:ubs}), that we need later in the proof of Lemma~\ref{lem:runtime}.
\begin{lemma}[Feasible upper bounds]\label{lem:upperboundfeas}
	Consider $Q_i = \{(p^{(i)},u(p^{(i)}))\}$ initialized in Line~\ref{lin:ubs} of Algorithm~\ref{algo:apx}. Then, $\d(p^{(i)},O) \le u(p^{(i)})$. 
\end{lemma}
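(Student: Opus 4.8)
## Proof Proposal for Lemma~\ref{lem:upperboundfeas}

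\textbf{Goal and setup.} We must show that for each marked point $p^{(i)}$, the optimal solution $O$ places some center within distance $u(p^{(i)})$ of $p^{(i)}$; here $u(p) = 3 \cdot \min\{\alpha > r : |\ball(p,\alpha)| \ge \mathcal{G}/\alpha\}$. The plan is a direct averaging/counting argument: if $O$ had no center within $u(p^{(i)})$ of $p^{(i)}$, then all the points of $P$ lying in a suitable ball around $p^{(i)}$ would be ``far'' from $O$, and their accumulated $r$-cost would exceed $\mathcal{G} \ge \OPT_r$, a contradiction.

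\textbf{Key steps.} Write $\alpha^* = \min\{\alpha > r : |\ball(p^{(i)},\alpha)| \ge \mathcal{G}/\alpha\}$, so $u(p^{(i)}) = 3\alpha^*$, and let $B = \ball(p^{(i)}, \alpha^*)$, so $|B| \ge \mathcal{G}/\alpha^*$. Suppose toward a contradiction that $\d(p^{(i)}, O) > 3\alpha^*$. First I would use the triangle inequality: for every $q \in B$ we have $\d(q, O) \ge \d(p^{(i)}, O) - \d(p^{(i)}, q) > 3\alpha^* - \alpha^* = 2\alpha^*$, hence $\d_r(q, O) = \max\{\d(q,O) - r, 0\} \ge 2\alpha^* - r \ge \alpha^*$, using $\alpha^* > r$. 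Therefore
\[
\OPT_r = \cost_r(P, O) \ge \sum_{q \in B} \d_r(q, O) \ge |B| \cdot \alpha^* \ge \frac{\mathcal{G}}{\alpha^*} \cdot \alpha^* = \mathcal{G} \ge \OPT_r,
\]
which forces all inequalities to be equalities — in particular it is already a contradiction as soon as any of the bounds is strict, and $\d_r(q,O) \ge \alpha^*$ with $2\alpha^* - r > \alpha^*$ (since $\alpha^* > r$) gives the strict gap. Hence $\d(p^{(i)}, O) \le 3\alpha^* = u(p^{(i)})$, i.e.\ $O$ contains a center within $u(p^{(i)})$ of $p^{(i)}$, which is what the request $(p^{(i)}, u(p^{(i)}))$ demands.

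\textbf{Anticipated obstacle.} The argument itself is a short counting estimate; the only delicate points are bookkeeping the constants ($r < \alpha^*$ giving $2\alpha^* - r \ge \alpha^*$, and the factor $3$ in the definition of $u$ providing slack in the triangle inequality) and making sure the set over which we sum, $\ball(p^{(i)},\alpha^*)$, indeed has the claimed cardinality lower bound $\mathcal{G}/\alpha^*$ — this is exactly the defining property of $\alpha^*$ as the minimizer, so it is immediate. I would also note that $\min\{\alpha > r : |\ball(p,\alpha)| \ge \mathcal{G}/\alpha\}$ is well-defined (the set is nonempty and the minimum is attained) because $|\ball(p,\alpha)| \le n$ is a step function nondecreasing in $\alpha$ while $\mathcal{G}/\alpha$ is decreasing, and for $\alpha$ large enough the ball contains all of $P$; attainment of the min follows since there are only finitely many distinct ball sizes. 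Thus the real content is just the two-line chain of inequalities above, and I expect no genuine difficulty beyond tracking the constants $\varepsilon/3$-free here (the guess quality of $\mathcal{G}$ plays no role in this particular lemma, only $\mathcal{G} \ge \OPT_r$).
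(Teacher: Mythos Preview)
Your proof is correct and follows essentially the same approach as the paper: assume for contradiction that $\d(p^{(i)},O) > 3\alpha^*$, use the triangle inequality to show every point in $\ball(p^{(i)},\alpha^*)$ has $r$-distance to $O$ exceeding $\alpha^*$, and sum to obtain $\cost_r(P,O) > \mathcal{G}$. Your additional remarks on strictness and on the well-definedness of the minimum are fine extra care, but the core argument matches the paper's line for line.
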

\begin{proof}
	Suppose $\d(p^{(i)},O) > u(p^{(i)})$. Letting $\alpha = u(p^{(i)})/3$, we have that $|\ball(p,\alpha)| \ge \mathcal{G}/\alpha$. Since,  $\d(p^{(i)},O) > 3\alpha$, we have $\d(p,O) > 2\alpha$ for $p \in  \ball(p^{(i)},\alpha)$. Using $\alpha>r$, this means $\d_r(p,O) >\alpha$  for $p \in  \ball(p,\alpha)$. Therefore,
	\[
	\cost_r(P,O) \ge \sum_{p \in \ball(p^{(i)},\alpha)}\d_r(p',O) =  \sum_{p \in \ball(p^{(i)},\alpha)}(\d(p,O) - r)> \frac{\mathcal{G}}{\alpha}\cdot \alpha =\mathcal{G},
	\]
	contradicting the the cost of $O$. 
\end{proof}

Next, we have the following lemma, whose proof is identical to~\cite{AbbasiClustering23}, that says that the initialization of $X$ at line~\ref{lin:initsolution} is successful and the solution maintained by the algorithm always satisfies the upper bounds within a factor of $3.1$. 
\begin{lemma}[Lemma V.5 of ~\cite{AbbasiClustering23}]\label{lem:3apxsol}
	The number of marked points in line~\ref{lin:initpoints} is at most $k$, i.e., $k' \le k$. Hence, the initialization of $X$ at line~\ref{lin:initsolution} is successful.
	Furthermore, at any iteration, the solution $X$  maintained by the algorithm satisfies that $\d(p,X) \le 3.1u(p)$, for every $p \in P$. 
\end{lemma}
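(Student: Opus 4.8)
The plan is to prove the two claims in order. First, I would bound the number of marked points by $k$ via a packing argument against the optimal solution $O$. By Lemma~\ref{lem:upperboundfeas}, every marked point $p^{(i)}$ has an optimal center within distance $u(p^{(i)})$. I claim that two distinct marked points $p^{(i)}, p^{(j)}$ cannot be served by the same optimal center $o$: if both $\d(p^{(i)}, o) \le u(p^{(i)})$ and $\d(p^{(j)}, o) \le u(p^{(j)})$ held, then $o$ would lie in $\ball(p^{(i)}, u(p^{(i)})) \cap \ball(p^{(j)}, u(p^{(j)}))$, but these balls are disjoint by the marking rule in Line~\ref{lin:mis}. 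Hence the marked points inject into the $k$ optimal centers, so $k' \le k$. Since $Q_i$ for $i \in [k']$ is a single request $(p^{(i)}, u(p^{(i)}))$ satisfiable (exactly, hence within factor $1+\varepsilon/40$) by the corresponding optimal center, and $Q_i = \emptyset$ for $i > k'$ is trivially satisfiable, the initialization of $X$ in Line~\ref{lin:initsolution} succeeds.

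For the second claim, I would show the invariant $\d(p, X) \le 3.1\, u(p)$ for every $p \in P$ at every iteration. Fix any $p \in P$. By the marking procedure in Line~\ref{lin:mis}, there is some marked point $p^{(i)}$ with $u(p^{(i)}) \le u(p)$ whose ball $\ball(p^{(i)}, u(p^{(i)}))$ intersects $\ball(p, u(p))$ (otherwise $p$ itself would have been marked); this gives $\d(p, p^{(i)}) \le u(p) + u(p^{(i)}) \le 2\, u(p)$. Now, the algorithm only ever adds requests to the $Q_i$'s and recomputes centers via \ballintalg, and the request $(p^{(i)}, u(p^{(i)}))$ is never removed from $Q_i$; therefore, at every iteration, the current center $x_i$ satisfies $\d(x_i, p^{(i)}) \le (1+\varepsilon/40)\, u(p^{(i)}) \le 1.1\, u(p^{(i)}) \le 1.1\, u(p)$. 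Combining via the triangle inequality on the true metric, $\d(p, X) \le \d(p, x_i) \le \d(p, p^{(i)}) + \d(p^{(i)}, x_i) \le 2\,u(p) + 1.1\, u(p) = 3.1\, u(p)$, as claimed.

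The main technical point to be careful about is that the invariant must be maintained \emph{throughout} the run of the while loop, not just at initialization: this is exactly why it matters that the seed request $(p^{(i)}, u(p^{(i)}))$ stays in $Q_i$ forever and that \ballintalg is called with error parameter $\varepsilon/40$, so the $(1+\varepsilon/40) \le 1.1$ slack is what the constant $3.1 = 2 + 1.1$ absorbs. One should also note the edge case where $p$ is itself a marked point, in which case $i$ is its own cluster index and the bound is immediate (indeed better). Since this argument is structurally identical to the one in~\cite{AbbasiClustering23}, I would state it concisely and cite Lemma V.5 there for the details, as the excerpt already does.
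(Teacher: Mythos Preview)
Your proposal is correct and follows exactly the standard argument that the paper defers to~\cite{AbbasiClustering23}: the packing of disjoint balls $\ball(p^{(i)},u(p^{(i)}))$ against the $k$ optimal centers (via Lemma~\ref{lem:upperboundfeas}) for the first part, and the triangle-inequality chain $\d(p,X)\le \d(p,p^{(i)})+\d(p^{(i)},x_i)\le 2u(p)+(1+\varepsilon/40)u(p)\le 3.1\,u(p)$ using the persistent seed request for the second part. Nothing is missing; this is precisely the intended proof.
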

\paragraph*{Bounding aspect ratio of radii}
Towards proving Lemma~\ref{lem:runtime}, we bound the aspect ratio of the radii in the requests.
\begin{lemma}\label{lem:scseq bounded}
	Consider a request set $Q_i=\{(p^{(1)}_i,\delta^{(1)}_i),\dots,(p^{(\ell)}_i,\delta^{(\ell)}_i)\}$, for $i \in [k]$.
	Let $X^{(j)}, j \in [\ell]$,  be the center maintained by the algorithm just before adding the request $(p^{(j)}_i,\delta^{(j)}_i)$ to $Q_i$. Further, let $x^{(j)}_i \in X^{(j)}$  be the center corresponding to cluster $i \in [k]$.
	Then, the sequence    $S_i =(x^{(1)}_i,p^{(1)}_i,\delta^{(1)}_i),\dots,(x^{(\ell)}_i,p^{(\ell)}_i,\delta^{(\ell)}_i)$ 
	is an algorithmic $(\ballintalg,\nicefrac{\varepsilon}{20})$-scattering. Furthermore, the  radii of requests in $S_i$  lie in the interval $[r,8r/\varepsilon] \cup [r_{min},\frac{10^5k}{\varepsilon^2}r_{min}]$, where $r_{min}$ is the smallest radii in $S_i$ that is larger than $8r/\varepsilon$.
\end{lemma}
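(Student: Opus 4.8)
The plan is to prove the two assertions of Lemma~\ref{lem:scseq bounded} separately: first that $S_i$ is an algorithmic $(\ballintalg,\nicefrac{\varepsilon}{20})$-scattering, and then that the radii of its requests lie in the claimed union of two bounded intervals.

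\textbf{Step 1: $S_i$ is a scattering sequence.} The covering condition is immediate from the algorithm: by Line~\ref{lin:ballalgo}, whenever a request is added to $Q_i$, the center $x_i$ is recomputed as $\ballintalg(Q_i,\fac,\nicefrac{\varepsilon}{40})$, so $x^{(j)}_i = \ballintalg(M,\{(p^{(1)}_i,\delta^{(1)}_i),\dots,(p^{(j-1)}_i,\delta^{(j-1)}_i)\},\nicefrac{\varepsilon}{40})$, which matches the ``Covering by $\ballintalg$'' requirement of \Cref{def:algsd} with error parameter $\nicefrac{\varepsilon}{20}$ (since $(\nicefrac{\varepsilon}{20})/2 = \nicefrac{\varepsilon}{40}$). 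For the $\varepsilon$-refutation condition we must show $\d(x^{(j)}_i,p^{(j)}_i) > (1+\nicefrac{\varepsilon}{20})\delta^{(j)}_i$. By Line~\ref{lin:add request}, $\delta^{(j)}_i = \d(p^{(j)}_i,X^{(j)})/(1+\nicefrac{\varepsilon}{12})$, and since $x^{(j)}_i \in X^{(j)}$ we have $\d(x^{(j)}_i,p^{(j)}_i) \ge \d(p^{(j)}_i,X^{(j)}) = (1+\nicefrac{\varepsilon}{12})\delta^{(j)}_i$. For the initial requests (those from Line~\ref{lin:ubs}, with $\delta = u(p^{(i)})$) one argues similarly, or one checks that they can be treated as the first triple whose refutation is vacuous/handled by the marking step; in any case $(1+\nicefrac{\varepsilon}{12}) > (1+\nicefrac{\varepsilon}{20})$, giving the required strict inequality. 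The one subtlety is whether the sampled $p$ in each iteration actually satisfies $\d(p,X) > \delta_p(1+\nicefrac{\varepsilon}{20})$ rather than just $\ge (1+\nicefrac{\varepsilon}{12})\delta_p$; this is fine since $(1+\nicefrac{\varepsilon}{12})\delta_p > (1+\nicefrac{\varepsilon}{20})\delta_p$ strictly whenever $\delta_p>0$, and $\delta_p = 0$ would mean $\d(p,X)=0$, contradicting that $p$ was sampled from $N$ or $A$ with positive $r'$-distance.

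\textbf{Step 2: Radii lie in two bounded intervals.} Partition the requests of $S_i$ into those coming from the initialization (Lines~\ref{lin:ub}--\ref{lin:ubs}) and those added inside the while loop, and within the latter, into ``nearby'' requests (added in the branch of Line~\ref{lin:if}, sampling $p \in N$) and ``faraway'' requests (added in the branch of Line~\ref{lin:else}, sampling $p \in A$). For a \emph{nearby} request, the sampled $p$ satisfies $\d(p,X) \le 8r/\varepsilon$ by definition of $N$ (Line~\ref{lin:setN}), and also $\d_{r'}(p,X) > 0$, i.e. $\d(p,X) > r' = r(1+\nicefrac{\varepsilon}{3}) > r$; hence $\delta_p = \d(p,X)/(1+\nicefrac{\varepsilon}{12}) \in (r/(1+\nicefrac{\varepsilon}{12}),\, 8r/\varepsilon)$, so (after absorbing the constant) its radius lies in $[r, 8r/\varepsilon]$. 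The initial requests have radius $u(p^{(i)}) = 3\min\{\alpha > r : |\ball(p^{(i)},\alpha)| \ge \mathcal{G}/\alpha\} > 3r$, which I will place in whichever of the two intervals it falls into — and here lies the point I expect to be the main obstacle: one must show that every initial radius is either at most $8r/\varepsilon$ (landing in the first interval) or else within a factor $\tfrac{10^5 k}{\varepsilon^2}$ of $r_{\min}$ (landing in the second). For the \emph{faraway} requests, the sampled $p \in A$ satisfies $\d_{r'}(p,X) > \frac{\varepsilon}{1000k}u(p)$ (Line~\ref{lin:setA}), and since we are in this branch conceptually because $p$ is ``faraway'' we also have $\d(p,X) > 8r/\varepsilon$ which gives $\d_{r'}(p,X) = \d(p,X) - r' \ge \d(p,X)(1 - \tfrac{\varepsilon}{8}\cdot\tfrac{1+\varepsilon/3}{1})$, hence $\d(p,X) = \Theta(\d_{r'}(p,X))$; combined with $\d(p,X) \le 3.1\,u(p)$ from \Cref{lem:3apxsol} we get $\frac{\varepsilon}{1000k}u(p) < \d_{r'}(p,X) \le \d(p,X) \le 3.1\, u(p)$, so $\delta_p = \Theta(\d(p,X))$ ranges over an interval of aspect ratio $O(k/\varepsilon)$ \emph{relative to $u(p)$}. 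To turn this into a single absolute interval one invokes the standard argument of~\cite{AbbasiClustering23}: the solution always $3.1$-approximates the upper bounds (\Cref{lem:3apxsol}), and the marking/covering step of Lines~\ref{lin:mis}--\ref{lin:initpoints} ensures the relevant $u$-values seen by cluster $i$ are comparable, so all faraway radii in $S_i$ lie within a multiplicative window of size $\tfrac{10^5 k}{\varepsilon^2}$ of the smallest one exceeding $8r/\varepsilon$, which is exactly $r_{\min}$. Chaining these bounds with explicit constant bookkeeping (the $8/\varepsilon$, the $1000k/\varepsilon$, the $3.1$, and the $r'$-versus-$r$ slack all multiply together to something below $10^5 k/\varepsilon^2$) completes the proof.

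The main obstacle, as noted, is the careful constant-tracking needed to conclude that the faraway radii — whose only \emph{a priori} control is ``comparable to $u(p)$ up to factor $O(k/\varepsilon)$, and $u(p)$-values within cluster $i$ are comparable'' — collapse into an interval of the stated aspect ratio anchored at $r_{\min}$; this requires combining \Cref{lem:3apxsol} (solution respects upper bounds within $3.1$), the definition of $A$, the definition of $u$, and the disjointness of balls from the marking step, and checking that no pair of radii can be farther apart than $\tfrac{10^5 k}{\varepsilon^2}$.
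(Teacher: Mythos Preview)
Your Step~1 and the nearby-request part of Step~2 are essentially the paper's argument. The real gap is in the faraway case.

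You claim that ``the marking/covering step of Lines~\ref{lin:mis}--\ref{lin:initpoints} ensures the relevant $u$-values seen by cluster $i$ are comparable,'' and then conclude that since each faraway $\delta_p$ is within $O(k/\varepsilon)$ of $u(p)$, all faraway radii collapse into one bounded interval. But the marking step only controls the $k'$ \emph{initial} points; it says nothing about the $u$-values of witnesses sampled later and added to $Q_i$. There is no reason, a priori, why two different faraway witnesses $p, p'$ added to the same $Q_i$ should have comparable $u(p), u(p')$. So the chain ``$\delta_p \approx u(p)$, $u$-values comparable, hence $\delta$-values comparable'' is missing its middle link, and the disjointness of balls from Line~\ref{lin:mis} does not supply it.

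The paper's argument is structurally different: it is a \emph{two-request contradiction}. Suppose two faraway requests $(p,\delta_p), (p',\delta_{p'}) \in Q_i$ have $\delta_{p'} < \tfrac{\varepsilon^2}{10^5 k}\delta_p$, and the algorithm did not fail when the second was added. Then some center $x_i$ satisfies both to within $(1+\varepsilon/40)$, so by triangle inequality $\d(p,p') \le (1+\varepsilon/40)(\delta_p + \delta_{p'})$. Now let $X$ be the solution when $(p,\delta_p)$ was added; by \Cref{lem:3apxsol}, $\d(p',X) \le 3.1\,u(p')$, and since $p' \in A$ we have $u(p') \le \tfrac{1000k}{\varepsilon}\d_{r'}(p',X') \le O(\tfrac{k}{\varepsilon})\delta_{p'}$. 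Combining, $\d(p,X) \le \d(p,p') + \d(p',X) \le (1+\varepsilon/40)\delta_p + O(\tfrac{k}{\varepsilon})\delta_{p'} < (1+\varepsilon/12)\delta_p$, contradicting $\delta_p = \d(p,X)/(1+\varepsilon/12)$. This is the ``standard argument of~\cite{AbbasiClustering23}'' you allude to, but it does \emph{not} go through comparability of $u$-values; it goes through the incompatibility of the two ball constraints. Your proposal needs this argument (Claim~\ref{cl:bndscseq} in the paper) in place of the unsupported claim about $u$-values.

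A smaller point: your worry about the initial requests $(p^{(i)}, u(p^{(i)}))$ is largely a red herring. The sequence $S_i$ in the lemma is indexed by centers $x^{(j)}_i$ maintained \emph{before} each request is added, so the scattering and radius analysis effectively concern the while-loop requests; the paper's proof partitions only into $S^n_i$ and $S^f_i$ and does not separately treat the initialization radius.
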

\begin{proof}
	The proof of the first part is similar to~\cite{AbbasiClustering23}.
	
	First note that $\delta^{(j)}_i = \frac{\d(p^{(j)}_i,X^{(j)})}{1+\nicefrac{\varepsilon}{12}} \le \frac{\d(p^{(j)}_i,x^{(j)}_i)}{1+\nicefrac{\varepsilon}{12}}$. Finally, since $x^{(j)}_i$ is computed by \ballintalg\ on $\{(p^{(1)}_i,\delta^{(1)}_i),\dots,(p^{(j-1)}_i,\delta^{(j-1)}_{i})\}$ and error parameter $\varepsilon/40$, we have that $\d(p^{(j')}_i,x^{(j)}_i) \le (1+\nicefrac{\varepsilon}{40})\delta^{(j')}_i$, for $j' <j$. Hence, $S_i$ is $(\ballintalg,\nicefrac{\varepsilon}{20})$-algorithmic scattering. 
	
	Now we bound that the radii of sequence $S_i, i \in[k]$. Towards this, we partition $S_i$ into $S^n_i$ and $S^f_i$, based on a partitioning of $Q_i$, as follows. 
	We say a request $(p^{(j)}_i,\delta^{(j)}_i) \in Q_i$ belongs to $Q^n_i$ if $p^{(j)}_i$ was sampled when the If condition was satisfied (in Line~\ref{lin:sample1}), otherwise it belongs to $Q_{i}^{f}$, which  corresponds to the Else condition (in Line~\ref{lin:sample2}). 
	Correspondingly,  $(x^{(j)}_i,p^{(j)}_i,\delta^{(j)}_i) \in S^n_i$  if $(p^{(j)}_i,\delta^{(j)}_i) \in Q^n_i$ and $(x^{(j)}_i,p^{(j)}_i,\delta^{(j)}_i) \in S^f_i$  if $(p^{(j)}_i,\delta^{(j)}_i) \in Q^f_i$.
	We bound the aspect ratio of each part, $S^n_i$ and $S^f_i$, separately. For $(x^{(j)}_i,p^{(j)}_i,\delta^{(j)}_i)\in S_{i}^{n}$, we have $r\le\delta^{(j)}_i\le8r/\varepsilon$ and hence, the radii of triples in $S^n_i$ lie in interval $[r,8r/\varepsilon].$
	Now consider $(x^{(j)}_i,p^{(j)}_i,\delta^{(j)}_i)\in S_{i}^{f}$. Recall that $X^{(j)}$ is the solution maintained by the algorithm when $(p^{(j)}_i,\delta^{(j)}_i)$ is added to $Q_{i}$. Then, note that $\d_{r'}(p^{(j)}_i,X^{(j)})>\varepsilon u(p)/1000k$ (see Line 14),
	and hence, we have $\d(p^{(j)}_i,X^{(j)})>\varepsilon u(p)/1000k$.
	The following claim, whose proof is similar to~\cite{AbbasiClustering23}, shows that the radii of requests in $S^f_i$ are also bounded, finishing the proof of the lemma. 
	\begin{claim}\label{cl:bndscseq}
		Consider        requests $(p,\delta_p),(p',\delta_{p'})$  added (in any order) to $Q_i$  in Line~\ref{lin:add request} of Algorithm~\ref{algo:apx}  such that $(p,\delta_p),(p',\delta_{p'}) \in Q^f_i$. If $\delta_{p'} < \nicefrac{\varepsilon^2 \delta_p}{10^5k}$, then the algorithm fails in Line~\ref{lin:ballalgo} upon making second of the two requests.
	\end{claim}
	\begin{claimproof}
		Suppose, for the sake of contradiction, the algorithm does not fail and finds a center $x_i$ such that $\d_(p,x_i) \le (1+\varepsilon/40)\delta_p$ and $\d(p',x_i)\le (1+\varepsilon/40)\delta_{p'}$. Thus, $\d(p,p') \le \d(p,x_i) + \d(p',x_i) \le (1+\varepsilon/40)(\delta_p + \delta_{p'})$.
		
		Therefore, we have
		\[
		\d(p,X) \le \d(p,p') + \d(p',X) \le  (1+\varepsilon/40)(\delta_p + \delta_{p'}) + 3.1u(p'),
		\]
		using Observation~\ref{lem:3apxsol}.
		Let $X$ be the center maintained by the algorithm when the request $(p,\delta_p)$ is added to $Q_i$. Then, we have
		\begin{equation}
			\delta_p = \frac{\d(p,X)}{(1+\varepsilon/12)},\label{eqn:rad_p}
		\end{equation}
		due to Line~\ref{lin:add request}. 
		Similarly, let $X'$ be the center maintained by the algorithm when request $(p',\delta_{p'})$ is added to $Q_i$. Then, since $(p',\delta_{p'}) \in Q^f_i$, we have that $u(p') \le 900 \d(p',X')/\varepsilon = 1000k\delta_{p'}/\varepsilon$, using $\delta_{p'} = \frac{\d(p',X')}{(1+\varepsilon/12)}$. Therefore, using  $\delta_{p'} < \nicefrac{\varepsilon^2 \delta_p}{10^5k}$,
		\begin{align*}
			\d(p,X) &\le (1+\nicefrac{\varepsilon}{40})(\delta_p + \delta_{p'}) + {3000k\delta_{p'}}/{\varepsilon} \le (1+\nicefrac{\varepsilon}{40})\delta_p + {3200k\delta_{p'}}/{\varepsilon} \\
			% &\le (1+\nicefrac{\varepsilon}{40})\delta_p + {3200k\delta_{p'}}/{\varepsilon}\\
			&\le (1+\nicefrac{\varepsilon}{40} + \nicefrac{\varepsilon}{25})\delta_p< (1+\nicefrac{\varepsilon}{12})\delta_p
		\end{align*}
		This means $\delta_p > \frac{\d(p,X)}{1+\varepsilon/12}$, contradicting (\ref{eqn:rad_p}).
	\end{claimproof}
\end{proof}
Now we are ready to finish the proof of \Cref{lem:runtime}.
\begin{proof}[Proof of Lemma~\ref{lem:runtime}]
	We apply Corollary~\ref{cor:scdonint} to  $S_i$ and note that the radii in $S_i$ lie in $[r,8r/\varepsilon] \cup [r_{min},\frac{10^5k}{\varepsilon^2}r_{min}]$, where $r_{min}$ is the smallest radii in $S_i$ that is larger than $8r/\varepsilon$, due to  Lemma~\ref{lem:scseq bounded}. This implies that the length of sequence $S_i$ is bounded by $O(\lambda(\nicefrac{\varepsilon}{40}) \frac{(\log (\nicefrac{k}{\varepsilon}))}{\varepsilon})$. Since in each iteration the algorithm adds one request to some $Q_i$, the total number of iterations is bounded by $O(\frac{k}{\varepsilon}\lambda(\nicefrac{\varepsilon}{40})(\log (\nicefrac{k}{\varepsilon}))$.
\end{proof}

\subsubsection{Bounding success probability}\label{ss:successprob}

The goal of this subsection is to prove \Cref{lem:successprob}, i.e., give a lower bound on the success probability of the algorithm. To this end, we first introduce the following notion.

\begin{definition}[Consistency]
	Consider a fixed hypothetical optimal solution $O=(o_{1},\cdots,o_{k})$.
	We say that the current state of execution (specified by $(X,Q_{1},\cdots,Q_{k})$)
	of Algorithm 1 is \emph{consistent} with $O$ if for any request $(p,\delta)\in Q_{i},i\in[k]$,
	we have $\d(p,o_{i})\le\delta$.
\end{definition}

Note that, \Cref{lem:3apxsol} implies that the initial set of requests (line \ref{lin:ubs}) are feasible. Since the balls $\ball(p^{(i)},u(p^{(i)})), i \in [k']$ are disjoint, we can relabel the optimum centers $O=\{o_1,\dots,o_k\}$ so that $\d(p^{(i)},o_i)\le u(p^{(i)})$. Thus, \Cref{lem:3apxsol} implies that the initial state of the algorithm as computed in lines \ref{lin:ub} to \ref{lin:initsolution} is consistent with a fixed optimal solution $O$ which is fixed henceforth. To prove \Cref{lem:successprob}, we will inductively argue that the state of the algorithm remains consistent with $O$ -- note that the base case is already shown. Most of this subsection is devoted to show the inductive step, i.e., to show that, given a consistent state at the start of an iteration, there is a good probability that the state of the algorithm at the end of the iteration is also consistent (cf. \Cref{lem:consistency}).

To this end, we introduce the following definitions w.r.t.~the current solution $X$.
\begin{definition}[Contribution and Witness]
	For any set $S \subseteq \cli$, we use $C_S \coloneqq \sum_{p \in S} \d_{r'}(p, X)$ to denote the \emph{contribution} of the set $S$ to the cost of the current solution.
	
	We say that a point $p \in P$ is an \emph{$\varepsilon/3$-witness} (or simply, a \emph{witness}) w.r.t.~the solution $X$ if it satisfies $\d_{(1+\sfrac{\varepsilon}{3})r}(p, X) > (1+\sfrac{\varepsilon}{3}) \cdot \d_r(p, O)$.
\end{definition}
The following claim can be proved by a simple averaging argument.
\begin{restatable}{claim}{wcontri} \restatemarker \label{cl:wcontri} 
	$C_W \ge \frac{\varepsilon C_\cli}{10}$.
\end{restatable}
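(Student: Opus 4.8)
\textbf{Proof proposal for Claim~\ref{cl:wcontri}.}

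The plan is to show that the non-witness points contribute only a small fraction of $C_\cli$, so that the witnesses must carry at least a $\varepsilon/10$ fraction. First I would set up the key quantity: the while-loop condition guarantees $\cost_{r'}(P,X) = C_\cli > (1+\varepsilon)\mathcal{G} \ge (1+\varepsilon)\OPT_r$, so the current solution is far from optimal with respect to the inflated radius $r' = (1+\varepsilon/3)r$. Let $\overline{W} \coloneqq P \setminus W$ be the set of non-witnesses. For each $p \in \overline{W}$ we have, by definition of witness, $\d_{r'}(p,X) \le (1+\varepsilon/3)\cdot \d_r(p,O)$. Summing over all $p \in \overline{W}$ (and using $\d_{r'}(p,X) \ge 0$ everywhere) gives
\[
C_{\overline W} \;=\; \sum_{p \in \overline W} \d_{r'}(p,X) \;\le\; (1+\varepsilon/3)\sum_{p\in\overline W}\d_r(p,O) \;\le\; (1+\varepsilon/3)\cdot \cost_r(P,O) \;=\; (1+\varepsilon/3)\cdot\OPT_r.
\]

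Next I would combine this with the lower bound on $C_\cli$ and the relation $\mathcal{G} \le (1+\varepsilon/3)\OPT_r$. Since $C_\cli = C_W + C_{\overline W}$, we get
\[
C_W \;=\; C_\cli - C_{\overline W} \;>\; (1+\varepsilon)\mathcal{G} - (1+\varepsilon/3)\OPT_r.
\]
Now I want to express the right-hand side as a multiple of $C_\cli$. Using $\mathcal{G} \ge \OPT_r$ and again $C_\cli > (1+\varepsilon)\mathcal{G}$, so $\OPT_r \le \mathcal{G} \le C_\cli/(1+\varepsilon)$, I would bound $(1+\varepsilon/3)\OPT_r \le \frac{1+\varepsilon/3}{1+\varepsilon} C_\cli$. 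Combined with $(1+\varepsilon)\mathcal{G} \ge C_\cli$ is the wrong direction, so instead I keep $(1+\varepsilon)\mathcal{G} > C_\cli/(1)$... more carefully: from $C_\cli > (1+\varepsilon)\mathcal{G}$ we cannot directly lower bound $(1+\varepsilon)\mathcal{G}$ by $C_\cli$. The cleaner route is to write everything in terms of $\OPT_r$: we have $C_\cli > (1+\varepsilon)\OPT_r$ (using $\mathcal{G}\ge\OPT_r$) and $C_{\overline W} \le (1+\varepsilon/3)\OPT_r$, hence $C_W > (1+\varepsilon)\OPT_r - (1+\varepsilon/3)\OPT_r = (2\varepsilon/3)\OPT_r$. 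Then since $\OPT_r \ge C_{\overline W}/(1+\varepsilon/3) \ge (C_\cli - C_W)/(1+\varepsilon/3)$, but this is circular; the simplest is to also use $\OPT_r > C_\cli/(1+\varepsilon) \cdot$ ... actually $C_\cli > (1+\varepsilon)\OPT_r$ means $\OPT_r < C_\cli/(1+\varepsilon)$, which is an \emph{upper} bound on $\OPT_r$ and does not help lower-bound $C_W$ in terms of $C_\cli$.

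The resolution, and the step I expect to be the only subtle point, is that we must lower-bound $\OPT_r$ in terms of $C_\cli$ — and for this we should use $C_{\overline W}\le (1+\varepsilon/3)\OPT_r$ together with the trivial bound $C_W \le C_\cli$ only if $C_W$ is small; the genuinely correct argument is the two-case split. If $C_W \ge \frac{\varepsilon}{10}C_\cli$ we are done. Otherwise $C_W < \frac{\varepsilon}{10}C_\cli$, so $C_{\overline W} > (1 - \frac{\varepsilon}{10})C_\cli$, giving $\OPT_r \ge \frac{C_{\overline W}}{1+\varepsilon/3} > \frac{1-\varepsilon/10}{1+\varepsilon/3}C_\cli$. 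But then $C_\cli > (1+\varepsilon)\OPT_r > (1+\varepsilon)\frac{1-\varepsilon/10}{1+\varepsilon/3}C_\cli$, and a routine check that $(1+\varepsilon)(1-\varepsilon/10) > (1+\varepsilon/3)$ for $\varepsilon \in (0,1)$ (indeed $(1+\varepsilon)(1-\varepsilon/10) = 1 + \frac{9\varepsilon}{10} - \frac{\varepsilon^2}{10} \ge 1 + \frac{9\varepsilon}{10} - \frac{\varepsilon}{10} = 1 + \frac{8\varepsilon}{10} > 1+\frac{\varepsilon}{3}$) yields $C_\cli > C_\cli$, a contradiction. Hence the second case is impossible and $C_W \ge \frac{\varepsilon}{10}C_\cli$, as claimed. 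I would present this as the clean two-case averaging argument; the only thing to be careful about is tracking the constants $\mathcal{G} \le (1+\varepsilon/3)\OPT_r$ versus $\mathcal{G}\ge\OPT_r$ and the inflated radius $r'$ consistently throughout.
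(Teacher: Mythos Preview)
Your final argument is correct and is essentially the same contradiction argument the paper gives: assume $C_W < \tfrac{\varepsilon}{10}C_\cli$, use the non-witness inequality to get $\OPT_r \ge \tfrac{1}{1+\varepsilon/3}C_{\overline W} > \tfrac{1-\varepsilon/10}{1+\varepsilon/3}C_\cli$, and combine with $C_\cli > (1+\varepsilon)\OPT_r$ (from the while-loop condition and $\mathcal{G}\ge\OPT_r$) to reach $C_\cli > C_\cli$. The paper just presents this as a clean chain of inequalities rather than exploring the dead ends first, so you should prune the false starts before writing it up.
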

\iffull
\begin{claimproof}
	Suppose not. Then,
	\begin{align*}
		\OPT_r &\ge \sum_{p \in \cli \setminus W} \d_r(p, O) 
		\\&\ge \frac{1}{1+\sfrac{\varepsilon}{3}} \sum_{p \in \cli \setminus W} \d_{r'}(p, X) \tag{$\forall p \in \cli \setminus W$, $\d_{r'}(p, X) \le (1+\sfrac{\varepsilon}{3}) \cdot \d_r(p, O)$}
		\\&\ge \frac{1-\sfrac{\varepsilon}{10}}{1+\sfrac{\varepsilon}{3}} \sum_{p \in \cli} \d_{r'}(p, X) \tag{By assumption, $C_{W} < \frac{\varepsilon C_P}{10}$}
		\\&\ge \frac{1}{1+\sfrac{\varepsilon}{2}} \cdot \cost_{r'}(\cli, X)
	\end{align*}
	The last inequality can be rearranged to $\cost_{r'}(\cli, X) \le (1+\sfrac{\varepsilon}{2}) \cdot \OPT_r \le (1+\sfrac{\varepsilon}{2})(1+\sfrac{\varepsilon}{3}) \cdot \mathcal{G} \le (1+\varepsilon) \mathcal{G}$, contradicting that $\cost_{r'}(\cli, X) > (1+\varepsilon) \cdot \mathcal{G}$.
\end{claimproof}
\fi
Next, we introduce several different classifications of witnesses. 
\begin{definition}[Different subsets of witnesses] \label{def:witnesstypes}
	For each $x_j \in X$, let $W_j$ denote the set of witnesses for which $x_j$ is a nearest center in $X$ (breaking ties arbitrarily). Then, $W_{j, {\sf near}} \coloneqq \LR{p \in W_j : \d(p, x_j) \le \sfrac{8r}{\varepsilon}}$, and $W_{j, {\sf far}} \coloneqq W_j \setminus W_{j, {\sf near}}$. Further, let $W_{\sf near} \coloneqq \bigcup_{j \in [k]} W_{j, {\sf near}}$, and $W_{\sf far} \coloneqq \bigcup_{j \in [k]} W_{j, {\sf far}}$. We will refer to a witness in $W_{\sf near}$ as a \emph{nearby witness} and a witness in $W_{\sf far}$ as a \emph{faraway witness}.
\end{definition}

Now, we consider two different cases regarding the behavior of the algorithm.

\subparagraph{Case 1: $C_{W_{\sf near}} \ge \frac{\varepsilon}{100} C_P$.} In this case, when we sample a point from $N \coloneqq \{p \in P: \d(p, X) \le 8r/\varepsilon\}$ proportional to their $\d_{r'}(\cdot, X)$ values, the probability of sampling a nearby witness is at $\frac{\varepsilon}{100}$. This will correspond to the ``good event''. We prove this formally in the following lemma.

\begin{lemma}[Nearby witness lemma] \label{lem:nearbywitnesslemma}
	Suppose the current solution $X$ at the start of an iteration satisfies $\cost_{r'}(P, X) > (1+\varepsilon) \cdot \mathcal{G}$. Further, suppose $C_{W_{\sf near}} \ge \frac{\varepsilon}{100} C_P$. Then, with probability at least $\frac{\varepsilon}{200k}$, the point $p \in P$, the index $i \in [k]$, and value $\delta_p$ defined in the iteration satisfy the following properties.
	\begin{enumerate}
		\item $o_i \in O$ is the closest center to $p \in P$, i.e., $\d(p, o_i) = \d(p, O)$, 
		\item $p \in W_{\sf near}$, i.e., (i) $\d_{r'(p, X)} > (1+\sfrac{\varepsilon}{3}) \d_{r}(p, o_i)$, and (ii) $\d(p, X) \le \frac{8r}{\varepsilon}$.
		\item $\d(p, o_i) < \frac{\d(p, X)}{1+\sfrac{\varepsilon}{12}} =: \delta_p \le \frac{8r}{\varepsilon}$. 
	\end{enumerate}
\end{lemma}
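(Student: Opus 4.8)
The plan is to exploit the three independent random choices made in one iteration of the while loop: the fair coin toss (\Cref{lin:cointoss}), the sample from $N$ (\Cref{lin:sample1}), and the uniform choice of $i \in [k]$ (\Cref{lin:clusterindex}). I would first observe that with probability exactly $\nicefrac{1}{2}$ the coin lands on ``nearby witness'', so we condition on that; this costs a factor of $2$ in the final bound. Next I would lower-bound the probability that the sampled point $p$ lies in $W_{\sf near}$. By definition of $N$ in \Cref{lin:setN} we have $W_{\sf near} \subseteq N$, so the sampling distribution in \Cref{lin:sample1} puts mass $\nicefrac{C_{W_{\sf near}}}{C_N}$ on $W_{\sf near}$, where $C_N = \sum_{b \in N} \d_{r'}(b, X) \le C_P$. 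Combining with the hypothesis $C_{W_{\sf near}} \ge \nicefrac{\varepsilon}{100} \cdot C_P$ gives that $p \in W_{\sf near}$ with probability at least $\nicefrac{\varepsilon}{100}$. Finally, given that $p \in W_{\sf near}$, the index $i$ chosen uniformly at random in \Cref{lin:clusterindex} equals the index of the optimal cluster of $p$ (i.e.\ the one achieving $\d(p,O) = \d(p,o_i)$) with probability $\nicefrac{1}{k}$. Multiplying the three probabilities yields $\nicefrac{1}{2} \cdot \nicefrac{\varepsilon}{100} \cdot \nicefrac{1}{k} = \nicefrac{\varepsilon}{200k}$, which is the claimed bound; I would take this event (coin correct, $p \in W_{\sf near}$, $i$ correct) as the ``good event'' for the rest of the argument.

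It remains to verify that on this good event all three enumerated properties hold. Property~1 is exactly the statement that $i$ was guessed correctly, which is part of the good event. Property~2 unpacks the definition of $W_{\sf near}$ (\Cref{def:witnesstypes}): membership in $W_j$ for $j$ the nearest-center index means $p$ is a witness, i.e.\ $\d_{r'}(p,X) = \d_{(1+\sfrac{\varepsilon}{3})r}(p,X) > (1+\sfrac{\varepsilon}{3})\,\d_r(p,O) = (1+\sfrac{\varepsilon}{3})\,\d_r(p,o_i)$ using Property~1; and membership in $W_{j,{\sf near}}$ rather than $W_{j,{\sf far}}$ means $\d(p,x_j) \le \nicefrac{8r}{\varepsilon}$, hence $\d(p,X) \le \nicefrac{8r}{\varepsilon}$. (One small subtlety: $W_{j,{\sf near}}$ is defined via the distance to the \emph{nearest} center $x_j$, so $\d(p,X) = \d(p,x_j)$, which makes this immediate.) The only real computation is Property~3: I would derive $\d(p,o_i) < \delta_p$ from the witness inequality. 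Since $\d_{r'}(p,X) > 0$ we have $\d(p,X) > r' = (1+\sfrac{\varepsilon}{3})r > r$, so $\d_r(p,o_i) = \max\{\d(p,o_i) - r, 0\}$ and the witness inequality reads $\d(p,X) - r' > (1+\sfrac{\varepsilon}{3})(\d(p,o_i) - r)$ if $\d(p,o_i) > r$, or $\d(p,X) - r' > 0$ trivially if $\d(p,o_i) \le r$; in either case a short rearrangement using $r' = (1+\sfrac{\varepsilon}{3})r$ gives $\d(p,X) > (1+\sfrac{\varepsilon}{3})\,\d(p,o_i) \ge (1+\sfrac{\varepsilon}{12})\,\d(p,o_i)$, i.e.\ $\d(p,o_i) < \nicefrac{\d(p,X)}{1+\sfrac{\varepsilon}{12}} = \delta_p$; and $\delta_p \le \d(p,X) \le \nicefrac{8r}{\varepsilon}$ by Property~2.

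I expect the main obstacle to be the careful handling of the $\max\{\cdot,0\}$ in the definition of the $\alpha$-distances when establishing Property~3: one must separately treat the case $\d(p,o_i) \le r$ (where $\d_r(p,o_i) = 0$ and the witness inequality only says $\d(p,X) > r'$) and make sure that the clean inequality $\d(p,o_i) < \delta_p$ still follows — it does, because then $\delta_p = \nicefrac{\d(p,X)}{1+\sfrac{\varepsilon}{12}} > \nicefrac{r'}{1+\sfrac{\varepsilon}{12}} \ge r \ge \d(p,o_i)$ — and similarly to confirm that the strict inequality is preserved. Everything else is bookkeeping over the three independent random choices and unwinding definitions; no appeal to the scatter dimension or to the Ball Intersection algorithm is needed here, since this lemma is purely about the sampling step, not the feasibility of the recomputed center.
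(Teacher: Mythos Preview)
Your proposal is correct and follows essentially the same approach as the paper: the same three independent random choices multiplied to give $\nicefrac{\varepsilon}{200k}$, the same observation $W_{\sf near}\subseteq N$ and $C_N\le C_P$, and the same two-case analysis on whether $\d(p,o_i)$ exceeds $r$ to derive Property~3. Your handling of the boundary case $\d(p,o_i)\le r$ via $\delta_p > \nicefrac{r'}{1+\varepsilon/12}\ge r$ is exactly what the paper does, and you correctly anticipated that this is the only place requiring care.
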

\begin{proof}
	First, in line \ref{lin:cointoss} with probability $1/2$, we correctly move to the ``nearby witness'' (if) case of line \ref{lin:if}. We condition on this event. Next, note that $W_{\sf near} \subseteq N$, and $C_{W_{\sf near}} \ge \frac{\varepsilon}{100}C_P \ge \frac{\varepsilon}{100} C_N$, where the first inequality is due to the case assumption. Therefore, in line \ref{lin:sample1}, the point $p$ sampled from $N$, will belong to $W_{\sf near}$ with probability at least $\frac{\varepsilon}{100}$. Let $i \in [k]$ denote the index such that $o_i \in O$ is the closest center in the optimal solution $O$, and we correctly guess the index $i$ in line \ref{lin:clusterindex}. Note that the probability of the algorithm making the ``correct'' random choices is at least $\frac{1}{2} \cdot \frac{\varepsilon}{100} \cdot \frac{1}{k} = \frac{\varepsilon}{200k}$.
	
	We condition on the said events. Since $p$ is a witness, we know that $\d_{r'}(p, X) > (1+\sfrac{\varepsilon}{3})\cdot \d_{r}(p, o_i)$. We first observe that $\d_{r'}(p, X)$ must be positive, which implies that $\d(p, X) > r' = (1+\sfrac{\varepsilon}{3})r$. Now, we consider two cases based on the value of $\d(p, O)$.
	
	If $\d(p, o_i) < r$, then $\d_{r}(p, o_i) = 0$, in which case, 
	\begin{equation}
		\d(p, o_i) < r \le \frac{r'}{1+\sfrac{\varepsilon}{12}} \le \frac{\d(p, X)}{1+\sfrac{\varepsilon}{12}} \label{eqn:case11}
	\end{equation}
	
	Otherwise, $\d(p, o_i) \ge r$, in which case $\d_r(p, o_i) = \d(p, o_i) - r$. Then, consider
	\begin{align*}
		&\d(p, X) - (1+\tfrac{\varepsilon}{3})r = \d_{r'}(p, X) > (1+\tfrac{\varepsilon}{3}) \cdot \d_{r}(p, o_i) = (1+\tfrac{\varepsilon}{3}) (\d(p, o_i) - r)
		\\\implies &\d(p, X) - (1+\tfrac{\varepsilon}{3})r > (1+\tfrac{\varepsilon}{3}) \cdot \d(p, o_i) - (1+\tfrac{\varepsilon}{3})r 
		\\\implies &\d(p, o_i) < \frac{\d(p, X)}{1+\sfrac{\varepsilon}{3}} \le \frac{\d(p, X)}{1+\sfrac{\varepsilon}{12}} \numberthis \label{eqn:case12}
	\end{align*}
	Thus, regardless of the value of $\d(p, o_i)$, we have established the third item, hence completing the proof of the lemma.
\end{proof}

\subparagraph{Case 2: $C_{W_{\sf near}} < \frac{\varepsilon}{100} C_P$.} In this case, most of contribution of witnesses is coming from ``faraway witnesses''. In this case, the ``correct choice'' corresponds to the case when we sample points from the set $A = \LR{ p \in P : \d_{r'}(p, X) > \frac{\varepsilon}{1000k} \cdot u(p) }$ as defined in line~\ref{lin:setA}. In this case, we will show that with good probability, the sampled point is a ``faraway witness''. Specifically, we show the following lemma.
\begin{lemma}[Faraway witness lemma] \label{lem:farawaywitnesslemma}
	Suppose the current solution $X$ at the start of an iteration satisfies $\cost_{r'}(P, X) > (1+\varepsilon) \cdot \mathcal{G}$. Further, suppose $C_{W_{\sf near}} < \frac{\varepsilon}{100} C_P$. Then, with probability at least $\frac{\varepsilon}{16k}$, the point $p \in P$, the index $i \in [k]$, and value $\delta_p$ defined in the iteration satisfy the following properties.
	\begin{enumerate}
		\item $o_i \in O$ is the closest center to $p$, i.e., $\d(p, o_i) = \d(p, O)$, 
		\item $p \in W_{\sf far}$, i.e., (i) $\d_{r'(p, X)} > (1+\sfrac{\varepsilon}{3}) \d_{r}(p, o_i)$, and (ii) $\d(p, X) > \frac{8r}{\varepsilon}$,
		\item $\d_{r'}(p, X) > \frac{\varepsilon}{1000k} u(p)$, and
		\item $\d(p, o_i) < \frac{\d(p, X)}{1+\sfrac{\varepsilon}{12}} =: \delta_p \le \frac{8r}{\varepsilon}$. 
	\end{enumerate}
\end{lemma}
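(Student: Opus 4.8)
The plan is to mirror the structure of the proof of Lemma~\ref{lem:nearbywitnesslemma} (the nearby witness lemma), making the necessary adjustments to account for the ``faraway'' regime and the extra set $A$ from Line~\ref{lin:setA}. First I would lower bound the probability that the algorithm makes all the ``correct'' random choices in this iteration. By the case assumption $C_{W_{\sf near}} < \frac{\varepsilon}{100} C_P$, together with Claim~\ref{cl:wcontri} which gives $C_W \ge \frac{\varepsilon}{10} C_P$, we get $C_{W_{\sf far}} = C_W - C_{W_{\sf near}} \ge (\frac{\varepsilon}{10} - \frac{\varepsilon}{100}) C_P \ge \frac{\varepsilon}{12} C_P$. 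The coin toss on Line~\ref{lin:cointoss} lands on the ``faraway'' (else) branch with probability $\frac12$; conditioned on that, in Line~\ref{lin:sample2} we sample $p$ from $A$ proportional to $\d_{r'}(\cdot, X)$. The key sub-claim I would establish here is that $W_{\sf far} \subseteq A$: indeed for a faraway witness $p \in W_{j,{\sf far}}$ we have $\d(p, X) > \frac{8r}{\varepsilon}$, so $\d_{r'}(p, X) = \d(p,X) - r' \ge \d(p,X) - 2r > \frac{\d(p,X)}{2} > \frac{4r}{\varepsilon}$, and by Lemma~\ref{lem:3apxsol} we have $u(p) \ge \d(p, X)/3.1$, so $\d_{r'}(p, X) \ge \frac{\d(p,X)}{2} \ge \frac{3.1}{2} u(p) \ge \frac{\varepsilon}{1000k} u(p)$ with room to spare; hence $p \in A$. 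This gives $C_{W_{\sf far}} \le C_A \le C_P$, so a point sampled from $A$ lies in $W_{\sf far}$ with probability at least $C_{W_{\sf far}}/C_A \ge \frac{\varepsilon}{12}$. Finally the index guess in Line~\ref{lin:clusterindex} is correct with probability $\frac1k$, so the overall probability of all correct choices is at least $\frac12 \cdot \frac{\varepsilon}{12} \cdot \frac1k = \frac{\varepsilon}{24k} \ge \frac{\varepsilon}{16k}$ --- I should double-check the constants here and adjust the threshold bound if needed (the statement says $\frac{\varepsilon}{16k}$, so I may need to be slightly more careful, e.g.\ using $C_{W_{\sf far}} \ge \frac{9\varepsilon}{100}C_P$ and $\frac12 \cdot \frac{9}{100} \ge \frac{1}{24}$, which is not quite $\frac{1}{16}$; I would recheck whether the intended constant bound on $C_{W_{\sf near}}$ or the sampling argument yields $\frac{\varepsilon}{16k}$, or whether a slightly looser bound suffices downstream).

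Conditioning on these correct choices, items~1 and~2 are immediate: $o_i$ is by definition the closest optimal center to $p$ (item~1), $p \in W_{\sf far}$ means $p$ is a witness, giving $\d_{r'}(p,X) > (1+\sfrac{\varepsilon}{3})\d_r(p, o_i)$, and $p \in W_{j,{\sf far}}$ for the cluster $j$ whose center $x_j$ is nearest, so $\d(p, X) = \d(p, x_j) > \frac{8r}{\varepsilon}$ (item~2). Item~3 is exactly the membership $p \in A$, which I established above. For item~4, I would repeat the two-case analysis from the proof of Lemma~\ref{lem:nearbywitnesslemma} verbatim: since $p$ is a witness, $\d_{r'}(p,X) > 0$ forces $\d(p,X) > r' = (1+\sfrac{\varepsilon}{3})r$; if $\d(p, o_i) < r$ then $\d_r(p, o_i) = 0$ and $\d(p, o_i) < r \le \frac{r'}{1+\sfrac{\varepsilon}{12}} \le \frac{\d(p,X)}{1+\sfrac{\varepsilon}{12}}$; otherwise $\d_r(p, o_i) = \d(p, o_i) - r$ and the same algebraic manipulation as in Equation~\eqref{eqn:case12} yields $\d(p, o_i) < \frac{\d(p,X)}{1+\sfrac{\varepsilon}{3}} \le \frac{\d(p,X)}{1+\sfrac{\varepsilon}{12}} = \delta_p$. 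The one discrepancy to flag is the inequality $\delta_p \le \frac{8r}{\varepsilon}$ asserted in item~4: for a \emph{faraway} witness we have $\d(p, X) > \frac{8r}{\varepsilon}$, so $\delta_p = \frac{\d(p,X)}{1+\sfrac{\varepsilon}{12}}$ need not be at most $\frac{8r}{\varepsilon}$ --- this upper bound clause looks like a copy-paste artifact from Lemma~\ref{lem:nearbywitnesslemma} and should either be dropped or replaced by a correct bound (e.g.\ $\delta_p \le 3.1 u(p)$ via Lemma~\ref{lem:3apxsol}, or simply $\delta_p > \frac{8r}{\varepsilon(1+\sfrac{\varepsilon}{12})}$); I would state item~4 as just $\d(p, o_i) < \frac{\d(p, X)}{1+\sfrac{\varepsilon}{12}} = \delta_p$.

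The main obstacle I anticipate is not any single step but getting the constants to line up cleanly --- specifically verifying that $W_{\sf far} \subseteq A$ with the particular threshold $\frac{\varepsilon}{1000k} u(p)$ (which requires carefully combining $\d(p,X) > \frac{8r}{\varepsilon}$, the relation $r' \le 2r$, and the factor-$3.1$ bound $u(p) \le 3.1 \cdot \text{something}$ / $u(p) \ge \d(p,X)/3.1$ from Lemma~\ref{lem:3apxsol}), and then propagating the resulting contribution bounds through to the claimed success probability $\frac{\varepsilon}{16k}$. A secondary point of care is the direction of the inequality relating $u(p)$ and $\d(p,X)$: Lemma~\ref{lem:3apxsol} gives $\d(p,X) \le 3.1\, u(p)$, i.e.\ $u(p) \ge \d(p,X)/3.1$, which is exactly what is needed to show $\d_{r'}(p,X) \gtrsim u(p)$ for faraway points and hence $p \in A$ --- so the logic is sound, but I would write it out explicitly to make sure no reciprocal is flipped. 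Everything else (items 1, 2, and the two-case argument for item~4) transfers directly from the nearby-witness proof with no essential change.
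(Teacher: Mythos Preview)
Your argument has a genuine gap at the key sub-claim $W_{\sf far} \subseteq A$. You invoke Lemma~\ref{lem:3apxsol}, which says $\d(p,X) \le 3.1\,u(p)$, i.e.\ $u(p) \ge \d(p,X)/3.1$. But to get $\d_{r'}(p,X) > \frac{\varepsilon}{1000k}\,u(p)$ you need an \emph{upper} bound on $u(p)$ in terms of $\d(p,X)$, not a lower bound. Your chain ``$\d_{r'}(p,X) \ge \frac{\d(p,X)}{2} \ge \frac{3.1}{2}\,u(p)$'' has the second inequality pointing the wrong way: from $u(p) \ge \d(p,X)/3.1$ you get $\frac{\d(p,X)}{2} \le \frac{3.1}{2}\,u(p)$, which is useless here. (You flag exactly this concern at the end, but then conclude the logic is sound; it is not.) In fact $W_{\sf far} \subseteq A$ need not hold: an isolated faraway witness can have $u(p)$ of order $\mathcal{G}$ while $\d_{r'}(p,X)$ is tiny compared to $\mathcal{G}$, so $p \notin A$.

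What the paper actually does is substantially more work, and this is the missing idea. It first restricts to the ``heavy'' indices $H = \{j : C_{W_{j,{\sf far}}} \ge \frac{\varepsilon C_P}{100k}\}$, which still carry most of the faraway contribution. Then for each $j \in H$ it orders $W_{j,{\sf far}}$ by distance to $x_j$ and splits it at (roughly) the contribution-median into a lower half $W^{-}_{j,{\sf far}}$ and an upper half $W^{+}_{j,{\sf far}}$, each with contribution $\Omega(\frac{\varepsilon C_P}{k})$. For $p \in W^{+}_{j,{\sf far}}$, every $q \in W^{-}_{j,{\sf far}}$ is within distance $O(\d_{r'}(p,x_j))$ of $p$, and the lower bound on $C_{W^{-}_{j,{\sf far}}}$ together with the per-point upper bound $\d_{r'}(q,x_j) \le \d_{r'}(p,x_j)$ forces $|W^{-}_{j,{\sf far}}|$ to be large. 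This gives a ball around $p$ with many points, which by the \emph{definition} of $u(p)$ in Line~\ref{lin:ub} yields $u(p) \le \frac{900k}{\varepsilon}\,\d(p,x_j)$ and hence $p \in A$. So only the ``upper halves'' $W^{+}_{j,{\sf far}}$ are shown to lie in $A$, and it is their contribution that drives the sampling bound. Your proposal skips this entire density/packing step and cannot be repaired by adjusting constants.

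Your remaining observations are fine: items~1--3 are indeed immediate once the correct conditioning event is identified, the two-case argument for the strict inequality in item~4 carries over essentially as in Lemma~\ref{lem:nearbywitnesslemma} (the paper splits at $\d(p,o_i) \lessgtr \frac{6r}{\varepsilon}$ rather than at $r$, but the spirit is the same), and you are right that the clause $\delta_p \le \frac{8r}{\varepsilon}$ in item~4 is a copy-paste artifact that is false for faraway witnesses.
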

\begin{proof}
	First, in line \ref{lin:cointoss} with probability $1/2$, we correctly move to the ``faraway witness'' (\textbf{else}) case of line \ref{lin:else}. We condition on this event. Now, by combining \Cref{cl:wcontri} and the case assumption we obtain, 
	\begin{equation}
		C_{W_{\sf far}} = C_{W} - C_{W_{\sf near}} \ge C_P \cdot \lr{\tfrac{\varepsilon}{10} - \tfrac{\varepsilon}{100} } = \tfrac{\varepsilon}{9} C_P. \label{eqn:case2}
	\end{equation}
	Next, let $H \coloneqq \LR{ j \in [k]: C_{W_{j, {\sf far}}} \ge \frac{\varepsilon C_P}{100k}}$. It follows that,
	\begin{align}
		\sum_{j \in [k] \setminus H}C_{W_{j, {\sf far}}} &\le k \cdot \frac{\varepsilon C_P}{100k} \nonumber
		\\\implies \sum_{j \in H} C_{W_{j, {\sf far}}} &\ge C_{W_{\sf far}} - \sum_{j \in [k] \setminus H}C_{W_{j, {\sf far}}} \ge \frac{\varepsilon C_P}{9} - \frac{\varepsilon C_P}{100} \ge  \frac{\varepsilon C_P}{8} \label{eqn:mostcontri}
	\end{align}
	Fix a $j \in H$, and let us index the points in $W_{j, {\sf far}} = \LR{z_1, z_2, \ldots, z_{\ell}}$ in the non-decreasing order of their distances $\d(z, x_j)$ (and equivalently, $\d_r(z, x_j)$). First, we state the following simple consequences of various definitions for future reference.
	\begin{restatable}{observation}{distances} \label{obs:distances}
		For each $z \in W_{j, {\sf far}}$, the following bounds hold.
		\begin{enumerate}
			\item $\d(z, x_j) \ge \frac{8r}{\varepsilon}$, 
			\item $\d(z, O) > (1+\frac{\varepsilon}{3}) \cdot \d(z, x_j)$
			\item $\d_{r'}(z, x_j) \ge \frac{6r}{\varepsilon}$, and
			\item $\d_r(z, x_j) \le \d(z, x_j) \le (1+\frac{\varepsilon}{5}) \cdot \d_{r'}(z, x_j) \le (1+\frac{\varepsilon}{5}) \cdot \d_{r}(z, x_j)$.
		\end{enumerate}
	\end{restatable}
	\iffull
	\begin{claimproof}
		Item 1 follows from the fact that $z \in W_{j, {\sf far}}$. Item 2 follows from the fact that $z$ is an $\varepsilon/3$-witness. For item 3, note that $\d_{r'}(z, x_j) = \max\LR{\d(z, x_j) - r', 0} = \d(z, x_j) - r' \ge \frac{8r}{\varepsilon} - (1+\tfrac{\varepsilon}{3}){r} > \frac{6r}{\varepsilon}$, since $\varepsilon < 1$. In item 4, the first inequality follows from the definition. For the second inequality, consider,
		\begin{align}
			\d_r(z, x_j) &= \d(z, x_j) - r \ge \d(z, x_j) - \sfrac{\varepsilon}{6} \cdot \d(z, x_j) = (1-\sfrac{\varepsilon}{6}) \cdot \d(z, x_j) \nonumber
			\\\implies \d(z, x_j) &\le \frac{1}{1-\sfrac{\varepsilon}{6}} \cdot \d_r(z, x_j) \le (1+\sfrac{\varepsilon}{5}) \cdot \d_r(z, x_j). \label{eqn:obsdistances}
		\end{align}
		This completes the proof of the observation.
	\end{claimproof}
	\fi
	Let us return to the points in $W_{j, {\sf far}} = \LR{z_1, z_2, \ldots, z_\ell}$. Let $q \in [\ell]$ denote the minimum index such that, the contribution of the set $W^{-}_{j, {\sf far}} = \LR{z_1, z_2, \ldots, z_q}$, is at least $\frac{C_{W_{j, {\sf far}}}}{2}$. Note that by a simple argument, this implies that the contribution of the set $W^{+}_{j, {\sf far}} = W_{j, {\sf far}} \setminus W^{-}_{j, {\sf far}}$ is also at least $\frac{C_{W_{j, {\sf far}}}}{3}$ (see, e.g., Lemma 7.20 in \cite{CyganFKLMPPS15}). Hence, we have the following lower bounds on the contribution of both the sets, by recalling that $j \in H$.
	\begin{equation}
		C_{W^{-}_{j, {\sf far}}} \ge \frac{\varepsilon\,C_P}{200k}, \qquad C_{W^{+}_{j, {\sf far}}} \ge \frac{\varepsilon\,C_P}{300k} \label{eqn:witnesscontribution}
	\end{equation}
	Next, we first prove the following technical claim.
	
	\begin{restatable}{claim}{wplus} \label{cl:wplusinA}
		Let $j \in H$.
		For all $p \in W^{+}_{j, {\sf far}}$, it holds that $u(p) \le \frac{900k \d(p, x_j)}{\varepsilon}$. Therefore,
		$W^{+}_{j, {\sf far}} \subseteq A$. 
	\end{restatable}
	\begin{claimproof}
		For this proof, we use the following shorthand: $W^+ \coloneqq W^{+}_{j, {\sf far}}$, and $W^- \coloneqq W^{-}_{j, {\sf far}}$. Now, fix an arbitrary point $p \in W^+$. For any $q \in W^-$, $\d(p, x_j) \ge \d(q, x_j)$, which implies that 
		\begin{equation}
			\d(p, q) \le \d(p, x_j) + \d(q, x_j) \le 2 \cdot \d(p, x_j) \le 2(1+\tfrac{\varepsilon}{5}) \cdot \d_{r'}(p, x_j) \le 4 \cdot \d_{r'}(p, x_j) \label{eqn:pqbound}
		\end{equation}
		Here, we use the property 4 from \Cref{obs:distances} in the third inequality in the above. On the other hand, note that,
		\begin{align}
			\frac{\varepsilon\,\OPT_r}{200k} \le \frac{\varepsilon\,C_P}{200k} \le C_{W^-}  = \sum_{q \in W^-} \d_{r'}(q, x_j) \le \d_{r'}(p, x_j) \cdot |W^{-}| \label{eqn:wminuscontri}
		\end{align}
		Then, if we set $\alpha = 12 \d_{r'}(p, x_j)$, from (\ref{eqn:pqbound}), we obtain that $W^{-} \subseteq \ball(p, \alpha/3)$. Now, combining this with (\ref{eqn:wminuscontri}), we obtain that,
		\begin{align}
			|\ball(p, \alpha/3)| \ge |W^{-}| \ge \frac{\varepsilon~\OPT_r}{300k \cdot \d_{r'}(p, x_j)} \ge \frac{\varepsilon~\OPT_r}{25k \alpha} = \frac{\varepsilon~\OPT_r}{75k (\alpha/3)} \label{eqn:ballcontain}
		\end{align}
		Hence, we have that $|\ball(p, \nicefrac{75k\alpha}{3\varepsilon})| \ge \frac{\varepsilon~\OPT_r}{75k (\alpha/3)}$.
		Using $\d(p,x_j) \ge \nicefrac{8r}{\varepsilon}$ from Observation~\ref{obs:distances} and the fact $\d_r(p,x_j) \ge 5\d(p,x_j)/6$, we have $\frac{75k\alpha}{3\varepsilon} = \frac{300k\,\d_{r'}(p,x_j)}{\varepsilon} >r$.  Therefore, $u(p) \le  \frac{900k\,\d_{r'}(p,x_j)}{\varepsilon}\le  \frac{900k\,\d(p,x_j)}{\varepsilon}$, as desired.
	\end{claimproof}

	Recall from (\ref{eqn:mostcontri}) that $C_{W_{\sf far} \cap A} \ge \sum_{j \in H} C_{W^+_{j, {\sf far}}} \ge \frac{\varepsilon\,C_P}{8} \ge \frac{\varepsilon\,C_A}{8}$, therefore, when we sample a point from $A$, the probability that the sampled point $p$ belongs to $\bigcup_{j \in H} W^+_{j, {\sf far}}$ is at least $\frac{\varepsilon}{8}$. Now, we condition on this event. Let $o_i \in O$ be the nearest center to $p$, and in line \ref{lin:clusterindex}, the probability that we sample the correct index is $\frac{1}{k}$. Thus, the total probability of the algorithm making the ``correct choices'' in this case is at least $\frac{1}{2} \cdot \frac{\varepsilon}{8} \cdot \frac{1}{k} = \frac{\varepsilon}{16k}$. We condition on these choices. Note that, item 1 is thus satisfied due to the correct sampling of $i$, item 2 is satisfied due $p \in \bigcup_{j \in H} W^+_{j, {\sf far}} \subseteq W_{\sf far} \subseteq W$, and item 3 is satisfied since $p \in A$ by construction. Thus, we are only left with showing item 4, to which end, we consider different cases for the distance between $p$ and its closest optimal center, $o_i$.
	
	If $\d(p, o_i) < \frac{6r}{\varepsilon}$, then since $\d(p, X) \ge \frac{8r}{\varepsilon}$, it easily follows that, 
	\begin{equation}
		\d(p, o_i) < \tfrac{3}{4} \cdot \d(p, X) \le \frac{\d(p, X)}{1+\sfrac{\varepsilon}{12}} \label{eqn:case21}
	\end{equation}	
	
	Otherwise, if $\d(p, o_i) \ge \frac{6r}{\varepsilon}$, then similar item 4 of \Cref{obs:distances}, we can show that, 
	\begin{equation}
		\d_r(p, o_i) < \d(p, o_i) \le (1+\tfrac{\varepsilon}{3}) \cdot \d_r(p, o_i)  \label{eqn:obound}
	\end{equation}
	However, since $p$ is an $\varepsilon/3$-witness, it follows that, $\d_{r'}(p, X) > (1+\frac{\varepsilon}{3}) \cdot \d_r(p, O)$. Therefore, we obtain that,
	\begin{align*}
		\d(p, x_j) \ge \d_{r'}(p, x_j) > (1+\tfrac{\varepsilon}{3}) \cdot \d_r(p, o_i) &\ge \frac{1+\sfrac{\varepsilon}{3}}{1+\sfrac{\varepsilon}{5}} \cdot \d(p, o_i) \ge (1+\tfrac{\varepsilon}{12}) \cdot \d(p, o_i) \numberthis \label{eqn:case22}
	\end{align*}
	Thus, in each of the sub-cases, in (\ref{eqn:case21}) and (\ref{eqn:case22}), we have shown that $\d(p, o_i) < \frac{\d(p, X)}{1+\sfrac{\varepsilon}{12}}$, thus completing the proof of the lemma.
\end{proof}

\Cref{lem:nearbywitnesslemma} and \Cref{lem:farawaywitnesslemma}
can be combined in a straightforward way to obtain the following lemma completing the inductive step.

\begin{restatable}{lemma}{consistencylemma} \label{lem:consistency}
	Consider an iteration of the algorithm such that the state of execution $(X, Q_1, \ldots, Q_k)$ at the start of the iteration is consistent with a fixed optimal solution $O$, and further $\cost_{r'}(P, X) > (1+\varepsilon) \cdot \mathcal{G}$. Then, with probability at least $\frac{\varepsilon}{200k}$, the state of the algorithm at the end of the iteration $(X', Q'_1, Q'_2, \ldots, Q'_k)$ is also consistent with $O$.
\end{restatable}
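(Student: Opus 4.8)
The plan is to case-split exactly the way the two preceding lemmas do, and then, in the "good event" of whichever lemma applies, to check two things: that the single new request added to $Q_i$ is consistent with $O$, and that the recomputation $x_i \gets \ballintalg(Q_i,\fac,\varepsilon/40)$ at line~\ref{lin:ballalgo} cannot fail.

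First I would observe that at the start of the iteration exactly one of $C_{W_{\sf near}} \ge \frac{\varepsilon}{100} C_P$ or $C_{W_{\sf near}} < \frac{\varepsilon}{100} C_P$ holds. In the former case \Cref{lem:nearbywitnesslemma} applies, and with probability at least $\frac{\varepsilon}{200k}$ the sampled point $p$, the guessed index $i$, and the value $\delta_p = \frac{\d(p,X)}{1+\varepsilon/12}$ satisfy $\d(p,o_i) = \d(p,O)$ and $\d(p,o_i) < \delta_p$ (items~1 and~3 there). In the latter case \Cref{lem:farawaywitnesslemma} applies, and the same conclusion holds with probability at least $\frac{\varepsilon}{16k} \ge \frac{\varepsilon}{200k}$ (items~1 and~4 there). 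So in either case, with probability at least $\frac{\varepsilon}{200k}$ we land in an event $\mathcal{E}$ on which $\d(p,o_i) \le \delta_p$; from now on condition on $\mathcal{E}$.

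Next I would verify that the updated request sets are consistent with $O$. Only $Q_i$ changes: $Q'_i = Q_i \cup \{(p,\delta_p)\}$ and $Q'_j = Q_j$ for $j \ne i$. For every $j$ and every pre-existing request $(p',\delta') \in Q_j$, the inequality $\d(p',o_j) \le \delta'$ is inherited from the consistency of the starting state; and for the one new request, $\mathcal{E}$ gives $\d(p,o_i) \le \delta_p$. Hence every request in every $Q'_j$ is satisfied by $o_j$, i.e.\ $(X',Q'_1,\dots,Q'_k)$ is consistent with $O$ provided the iteration completes without failure.

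Finally I would argue that the call $x_i \gets \ballintalg(Q'_i,\fac,\varepsilon/40)$ does not fail. Since $o_i \in \fac$ and $\d(o_i,p') \le \delta' \le (1+\varepsilon/40)\delta'$ for every $(p',\delta') \in Q'_i$, the set of centers feasible for $Q'_i$ is nonempty, so by the definition of a \ballint\ algorithm (\Cref{def:ball-int}) the procedure returns some $x_i$ instead of reporting failure. The returned $x_i$ need not equal $o_i$, but that is irrelevant: consistency constrains $Q'_j$ against $o_j$, not against the algorithm's current center. This completes the inductive step, proving the lemma. The substantive work is entirely in \Cref{lem:nearbywitnesslemma} and \Cref{lem:farawaywitnesslemma} (which in turn rest on \Cref{cl:wcontri} and \Cref{lem:3apxsol}); the only point that genuinely needs attention here is that the existence of the feasible center $o_i$ rules out a failure at line~\ref{lin:ballalgo}, so that the "correct choices" indeed carry consistency from the start of the iteration to its end.
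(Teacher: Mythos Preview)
Your proposal is correct and follows essentially the same approach as the paper: a case split on $C_{W_{\sf near}}$ that invokes \Cref{lem:nearbywitnesslemma} or \Cref{lem:farawaywitnesslemma}, plus the correctness of \ballintalg to rule out failure. The paper's own proof is a two-sentence sketch of exactly this; your version just makes explicit the verification that $o_i$ witnesses feasibility for $Q'_i$ and that consistency is a property of the $Q_j$'s relative to $O$ rather than of the computed centers.
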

\begin{proof}
	Consider the scenario described in the premise of the lemma. Then, the solution $X$ at the start of the iteration either satisfies that $C_{W_{\sf near}} \ge \frac{\varepsilon}{100} C_P$ (nearby witness case), or $C_{W_{\sf near}} \ge \frac{\varepsilon}{100} C_P$ (faraway witness case). Then, Lemmas~\ref{lem:nearbywitnesslemma} and \ref{lem:farawaywitnesslemma};
	along with the correctness of \ballintalg, together imply that, conditioned on the respective case assumption, the probability that the state of the algorithm is consistent with $O$ is at least $\frac{\varepsilon}{200k}$. \footnote{Note that this probability also accounts for the result of the coin toss in line \ref{lin:cointoss}.}
\end{proof}
Armed with \Cref{lem:consistency}, it is easy to conclude the proof of \Cref{lem:successprob}.
\begin{proof}[Proof of~\Cref{lem:successprob}]
	As argued initially, the state of the algorithm at line \ref{lin:initsolution} is consistent with the optimal solution $O$. Then, \Cref{lem:consistency} implies that, for any $\ell \ge 1$,
	if the algorithm runs for $\ell$ iterations, then with probability at least $\lr{\frac{\varepsilon}{200k}}^\ell$, the state of the algorithm remains consistent. Further, note that as long as the state of the algorithm remains consistent with $O$, then the algorithm cannot fail.
	Finally, \Cref{lem:runtime} implies that the algorithm terminates within $O(\frac{k}{\varepsilon} \lambda(\frac{\varepsilon}{40}) \log(\frac{k}{40}))$ iterations. 
	Therefore, the probability that the algorithm terminates without failure---i.e., by successfully breaking the \textbf{while} loop by finding a solution $X$ satisfying $\cost_{r'}(P, X) \le (1+\varepsilon)\OPT_r$, is at least $\lr{\frac{\varepsilon}{k}}^{O(\frac{k}{\varepsilon} \lambda(\frac{\varepsilon}{40}) \log(\frac{k}{40}))} = \exp\lr{ -O\lr{\frac{k}{\varepsilon} \log(\frac{k}{\varepsilon}) \lambda(\frac{\varepsilon}{40}) }}$. 
\end{proof}

\section{Extensions of the Bicriteria {\sf FPT-AS} } \label{sec:extensions}

Fomin et al.~\cite{FominGISZ24} defined a generalization of \probname, which they call {\sc Hybrid $(k, z)$-clustering}, wherein the objective is $\sum_{p \in P} (\d_r(p, X))^z$, for fixed $z \ge 1$, which simultaneously generalizes {\sc $(k, z)$-Clustering} and {\sc $k$-Center}.  They generalized their algorithm for $z = 2$, i.e., {\sc Hybrid $k$-Means}, but left the possibility of extending to an arbitrary value of $z$ conditional on the existence of a certain kind of sampling algorithm.
In this paper, we consider a much more general {\sc Hybrid Norm $k$-Clustering} problem that captures all {\sc $(k, z)$-Clustering} problems, and many other advanced problems. Here, the objective is to find a set $X \subseteq \fac$ that minimizes $f(\dd_r(P, X))$, where $\dd_r(P, X) = (\d_r(p_1, X), \d_r(p_2, X), \ldots, \d_r(p_n, X))$ is the vector of $r$-distances to the solution $X$, and $f: \mathbb{R}^n \to \mathbb{R}$ is a \emph{monotone norm}\footnote{$f: \mathbb{R}^n \to \mathbb{R}$ is  a \emph{norm} if it satisfies following properties for any $\mathbf{x}, \mathbf{y} \in \mathbb{R}^n$ and any $\lambda \in \mathbf{R}$, (i) $f(\mathbf{x}) = 0$ iff $\mathbf{x} = \mathbf{0}$, (ii) $f(\mathbf{x}+\mathbf{y}) \le f(\mathbf{x})+f(\mathbf{y})$, (iii) $f(\lambda \mathbf{x}) = |\lambda| f(\mathbf{x})$; furthermore, $f$ is \emph{monotone} if $f(\mathbf{x}) \le f(\mathbf{y})$ whenever $\mathbf{x} \le \mathbf{y}$, where $\le$ denotes point-wise inequality.}---we refer the reader to \cite{AbbasiClustering23} for a detailed background on norms and related notions. %\todo{I have added the definition to footnote. Feel free to move it around.}

\Cref{algo:apx} and its analysis can be extended to the general norm objective following the approach of \cite{AbbasiClustering23}. Here, we only sketch the modifications required to the algorithm and its proof, since this is not the main focus of this work.

The initial upper bounds $u(p)$ can be found in a similar manner. The while loop runs as long as the solution $X$ in hand satisfies that $f(\mathbf{d}_{r'}(P, X)) > (1+\sfrac{\varepsilon}{3}) \cdot \OPT_r$, where $\mathbf{d}_{r'}$ denotes the vector of distances $(\d_{r'}(p, X))_{p \in X}$, and $r' = (1+\sfrac{\varepsilon}{3})r$. If this is the case, then the first step is to compute\footnote{Actually, it is sufficient to compute an approximate subgradient, as done in~\cite{AbbasiClustering23}.} a \emph{subgradient} of $f$ at $\mathbf{d}_{r'}$, which is, loosely speaking, a weight function $w: P \to \mathbb{R}_{\ge 0}$, such that it suffices to focus the attention in the current iteration on $\wcost_{r'}(P, X) \coloneqq \sum_{p \in P} w(p) \d_r(p, X)$ that satisfies $\wcost_{r'}(P, X) = f(\mathbf{d}_{r'}(P, X)) > (1+\sfrac{\varepsilon}{3}) \cdot \OPT_r$. Our algorithm proceeds in a similar fashion, except that whenever points are sampled from the set $N$ (in line \ref{lin:sample1}) or set $A$ (in line \ref{lin:sample2}), the probability of sampling a point $p$ is proportional to $w(p) \cdot \d_r(p, X)$. The rest of the algorithm remains unchanged. 

There are a few minor modifications required in the analysis --- mainly in \Cref{ss:successprob}. First, while the definition of an $(\varepsilon/3)$-witness (or simply a \emph{witness}) remains unchanged, we redefine the \emph{contribution} of a subset $S \subseteq P$ to be $C_S \coloneqq \sum_{p \in S} w(p) \cdot \d_r(p, X)$. The nearby and faraway witness cases are then considered exactly as in \Cref{ss:successprob}. The proof of the analogue of \Cref{lem:nearbywitnesslemma} goes through without any modifications; whereas we need to be slightly more careful in the proof of the analogue of \Cref{lem:farawaywitnesslemma}. Here, in inequalities (\ref{eqn:wminuscontri} and \ref{eqn:ballcontain}) we need to take the weighted distances into account and relate $C_P$ to $f(\dd_{r'}(P, X))$, which requires a slight worsening of constants. With these minor changes in place, we can conclude that the state of the algorithm after $\ell$ iterations is consistent with a fixed optimal solution $O$ with probability at least $\Omega(\lr{\frac{\varepsilon}{k}})^\ell$. The analysis for bounding the number of iterations remains unchanged, and hence we obtain an \fptas with a similar running time, modulo the constants hidden in the big-Oh notation. We omit the details.

\section{Coreset}
In this section, we design a coreset for \probname in doubling metric of bounded dimension. More specifically, we prove the following (restated for convenience). 
\coresettheorem*

\iffull
\begin{algorithm}
	\caption{Coreset for \probname}
	\label{algo:coreset}
	\begin{algorithmic}[1]
		\Statex \textbf{Input:} Instance $\cI= (\cli, \fac, \d, r)$ of \probname, $\varepsilon \in (0, 1)$, and a set $T \subseteq \cli \cup \fac, |T|\le \gamma k$ such that $\cost_r(P,T) \le \alpha\cdot \OPT_r$
		\Statex \textbf{Output:} Coreset $(P',\{w(p')\}_{p'\in P'})$ for $\cI$
		\State $P' \gets \emptyset$
		\State Let $R = \frac{\cost_r(P,T)}{\alpha n}$ \Comment{using $\beta=1$}
		\For{$t_i \in T$}
		\For{$j \in \{0,1,\dots, 2\log \lceil \alpha n\rceil\}$}
		\State  let $B_i^j = \ball(t_i,2^jR)$ %and let $Q_i^j = B^j_i \setminus B^{j-1}_i$ with $Q^0_i = B^0_i$
		\EndFor
		\EndFor
		\State Decompose each ball $B^j_i$ into balls of radius each $\frac{\varepsilon 2^jR}{4\alpha}$
		\State Associate each point to a smallest ball containing it (breaking ties arbitrarily)
		\For{$i \in [\gamma k]$}
		\For{$j \in \{0,1,\dots, 2\log \lceil \alpha n\rceil\}$}
		\For{each smaller ball $b$ of $B^j_i$ that has an associated point $p$}
		\State $P' \gets P' \cup \{p\}$
		\State $w(p) =$ number of points in $P$ associated with $b$  
		\EndFor
		\EndFor
		\EndFor
		\State\Return $(P',\{w(p')\}_{p'\in P'})$
	\end{algorithmic}
\end{algorithm}
\fi
\begin{proof} A formal description of our algorithm can be found in \Cref{algo:coreset}, which is based on grid construction approach of~\cite{DBLP:conf/stoc/Har-PeledM04,AbbasiBBCGKMSS24}.
	For each $t_i \in T$, consider the balls $B^j_i = \ball(t_i,2^jR)$, for $j \in \{0,1,\dots, 2\log \lceil \alpha n\rceil\}$. Note that, since $\d(p,T) \le \cost(P,T) = \alpha n \cdot R$, we have that $p$ lies in some ball $B^j_i$, for  $i \in [\gamma k]$ and $j \in  \{0,1,\dots, 2\log \lceil \alpha n\rceil\}$. 
	
	The idea is to decompose each $B^j_i$ into smaller balls each of radius $\frac{\varepsilon2^jR}{4\alpha}$, and associate each point $p \in P$ to a unique smallest ball containing $p$, breaking ties arbitrary. We say a small ball $b$ is non-empty if there is a point in $P$ associated with $b$. Next, for each non-empty ball $b$, pick an arbitrary point $p$ associated with $b$ and add $p$ to $P'$ with weight $w(p)$ equal to the total number of points associated with $b$; we call such a point $p$ as the representative of points associated with $b$. 
	
	To bound the size of $P'$, note that, in doubling metric of dimension $d$, a unit ball can be covered by $2^{O(d\log(1/\varepsilon))}$ balls, each of radius $\varepsilon$. Hence, we have $|P'|=O(2^{O(d\log(1/\varepsilon))} \gamma k\log(\alpha n))$. 
	
	Recall that, for $X\subseteq \fac, |X|\le k$, we define the weighted cost of $X$ with respect to $(P',\{w(p')\}_{p'\in P'})$ as $\wcost_r(P',X) = \sum_{p'\in P}w(p')d_r(p',X)$.
	Now we show that the cost of $X$ with respect to $(P',\{w(p')\}_{p'\in P'})$ approximately preserves the cost of $X$ with respect to $P$.
	Consider a point $p\in P$ and let $p'\in P'$ be its representative in $P'$. Now, the contribution of $p$ towards $\cost_r(P,X)$ is $\d_r(p,X)$. On the other hand, its contribution towards $\wcost_r(P',X)$ is $\d_r(p',X)$. Hence,
	\[
	|\wcost_r(P',X) - \cost_r(P,X)| \le \sum_{p \in P} |\d_r(p,X)- \d_r(p',X)|.
	\]
	Now if $p \in \bigcup_{i \in \gamma k}B^0_i$, then $\d(p,X) - \frac{\varepsilon R}{4\alpha} \le \d(p',X) \le \d(p,X) + \frac{\varepsilon R}{4\alpha}$.
	Hence, 
	\[
	\max\{\d(p,X) - \nicefrac{\varepsilon R}{4\alpha}-r,0\} \le \d_r(p',X) \le \max\{\d(p,X) + \nicefrac{\varepsilon R}{4\alpha}-r,0\}
	\]
	Therefore, $
	\d_r(p,X)- \nicefrac{\varepsilon R}{4\alpha} \le \d_r(p',X) \le \d_r(p,X)+ \nicefrac{\varepsilon R}{4\alpha}$.
	Hence, we have
	\[
	\sum_{p \in P_0} |\d_r(p,X)- \d_r(p',X)| \le \frac{\varepsilon R n}{4\alpha} \le \frac{\varepsilon\OPT_r}{2} \le \frac{\varepsilon\cost_r(P,X)}{2}.
	\]
	
	Now, suppose $p \in P \setminus \bigcup_{i \in \gamma k}B^0_i$, then $\d(p,T) > R$. Let $j\ge 1$ be such that $2^{j-1}R \le \d(p,T) \le 2^jR$. Hence, we have that $2^jR \le 2\d(p,T)$. Therefore, using $\d(p,X) - \frac{\varepsilon d(p,T)}{2\alpha} \le \d(p',X) \le \d(p,X) + \frac{\varepsilon d(p,T)}{2\alpha}$, we have
	\[
	\max\{\d(p,X) - \nicefrac{\varepsilon d(p,T)}{2\alpha}-r,0\} \le \d_r(p',X) \le \max\{\d(p,X) + \nicefrac{\varepsilon d(p,T)}{2\alpha}-r,0\}
	\]
	This implies,$\d_r(p,X)- \nicefrac{\varepsilon d(p,T)}{2\alpha} \le \d_r(p',X) \le \d_r(p,X)+ \nicefrac{\varepsilon d(p,T)}{2\alpha}$.
	Thus, we have
	\[
	\sum_{p \in P \setminus P_0} |\d_r(p,X)- \d_r(p',X)| \le \sum_{p \in P \setminus P_0} \frac{\varepsilon}{2\alpha}\d(p,T) \le \frac{\OPT_r}{2} \le \frac{\varepsilon\cost_r(P,X)}{2}.
	\]
	Finally, we have $|\wcost_r(P',X) - \cost_r(P,X)| \le \varepsilon \cost_r(P,X)
	$, as desired. 
	
	To finish the proof, we invoke Algorithm~\ref{algo:coreset} using set $T$ obtained from the following lemma. At a high level, to obtain this lemma, we start from an $(18, 6)$-bicriteria approximation for \probname from \cite{FominGISZ24,ChakrabartyGK16}, and use the fact that, a ball of radius $O(r)$ can be decomposed into $2^{O(d)}$ balls of radius $r$, converting the guarantee from $\d_{6r}(\cdot, \cdot)$ to $\d_r(\cdot, \cdot)$. 
	\begin{restatable}[Set $T$ for \Cref{algo:coreset}]{lemma}{tric}\label{lem:tric}
		There is a polynomial-time algorithm that, given an instance $\cI= (\cli, \fac, \d, r)$ of \probname in doubling metric of dimension $d$, computes $T \subseteq \cli \cup \fac, |T|\le 2^{O(d)}k$ such that $\cost_r(P,T) \le 36\cdot \OPT_r$, where $\OPT_r$ is the optimal cost for $\cI$.
		% $(18,1,2^{O(d)})$-tricriteria solution for $\cI$.
	\end{restatable}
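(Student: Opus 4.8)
The plan is to establish \Cref{lem:tric} by chaining together an existing bicriteria approximation with a doubling-dimension ball-splitting trick. First I would invoke the known $(18, 6)$-bicriteria approximation for \probname --- this follows from \cite{FominGISZ24} together with the primal-dual / Lagrangian approach of \cite{ChakrabartyGK16} for facility-location-type problems, and it runs in polynomial time, producing a set $T_0 \subseteq \cli \cup \fac$ with $|T_0| \le k$ such that $\cost_{6r}(P, T_0) \le 18 \cdot \OPT_r$. (One must double-check the exact constants claimed in those references; if they differ slightly, the constant $36$ in the statement absorbs the discrepancy, and if necessary the lemma statement's constant can be adjusted.) The key point is that such a bicriteria guarantee is obtained by \emph{uni-criteria} techniques applied after the standard reduction that treats radius-$r$ balls as ``free'': one runs a constant-factor algorithm for \kmed on a modified instance and then pays for the radius violation.

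Next I would convert the radius-$6r$ guarantee into a radius-$r$ guarantee by refining the center set. The observation is that in a metric of doubling dimension $d$, any ball of radius $6r$ can be covered by $2^{O(d \log 6)} = 2^{O(d)}$ balls of radius $r$; pick the centers of such a cover for each $t \in T_0$ (these centers can be taken to be points of $\cli \cup \fac$ after a negligible perturbation, or one passes to a net of the relevant region). Let $T$ be the union of all these covering centers, so $|T| \le 2^{O(d)} \cdot |T_0| \le 2^{O(d)} k$. The crucial inequality is then that for every point $p$, $\d_r(p, T) \le \d_{6r}(p, T_0)$: if $p$ is within distance $6r$ of its nearest center $t \in T_0$, then $\d_r(p,T) = 0$ (since some radius-$r$ ball in the cover of $\ball(t,6r)$ contains $p$); otherwise $p$ lies outside $\ball(t, 6r)$, and moving from $t$ to the nearest cover-center $t' \in T$ with $\d(t, t') \le 6r$ changes the distance to $p$ by at most an additive $6r$ while we also shave $6r$ fewer from the radius, so $\d_r(p, T) \le \d(p,t') - r \le \d(p,t) + 6r - r \le (\d(p,t) - 6r) + 11r$; a more careful bookkeeping (or simply re-centering the cover at scale $r$ relative to $t$, using $\d(p,t) > 6r$) gives $\d_r(p,T) \le \d(p,t) - r$, and in fact one gets $\d_r(p,T) \le \d_{6r}(p,T_0) + 5r \le 2\d_{6r}(p,T_0)$ in the relevant regime where $\d_{6r}(p,T_0) \ge 5r$ --- and when $\d_{6r}(p,T_0) < 5r$ one checks directly $\d_r(p,T)$ is small. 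Summing over $p$ yields $\cost_r(P, T) \le 2 \cdot \cost_{6r}(P, T_0) \le 36 \cdot \OPT_r$.

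Finally I would assemble the two pieces: feed $T$ (with $\gamma k \coloneqq |T| = 2^{O(d)}k$ and $\alpha = 36$) into \Cref{algo:coreset}, whose correctness was already argued above in the body of the proof of \Cref{thm:coresetthm}; combined with the size bound $|P'| = 2^{O(d \log(1/\varepsilon))} \gamma k \log(\alpha n) = 2^{O(d\log(1/\varepsilon))} k \log|P|$ and running time $2^{O(d\log(1/\varepsilon))}|\cI|^{O(1)}$ for the grid construction, this completes the proof of \Cref{thm:coresetthm} modulo \Cref{lem:tric}.

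The main obstacle I anticipate is getting the constants in the $\d_{6r} \to \d_r$ conversion exactly right, and in particular ensuring that the covering centers can legitimately be taken inside $\cli \cup \fac$ (in a continuous doubling metric $\fac$ may not literally contain an optimal cover, so one works with an $(r/2)$-net and absorbs the loss into the already-crude constants). Everything else is routine: the bicriteria approximation is cited as a black box, and the coreset error analysis has already been carried out in the preceding paragraphs for an arbitrary such $T$.
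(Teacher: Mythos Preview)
Your strategy is exactly the paper's: start from the $(18,6)$-bicriteria solution $T_0$ and refine it via a doubling-dimension cover. The step that fails is the claimed pointwise inequality $\d_r(p,T) \le 2\,\d_{6r}(p,T_0)$. Take $p$ with $\d(p,T_0) = 6r + \epsilon$ for small $\epsilon > 0$: then $\d_{6r}(p,T_0) = \epsilon$, yet $p$ lies just outside every radius-$6r$ ball you covered, so the best you can say is $\d_r(p,T) \le \d(p,T_0) - r = 5r + \epsilon$, which is not $\le 2\epsilon$. Your own bookkeeping already exposes this: the ``relevant regime $\d_{6r}(p,T_0) \ge 5r$'' excludes such $p$, and ``checking directly'' there yields nothing, because $\d_r(p,T)$ really is of order $r$ while $\d_{6r}(p,T_0)$ is of order $\epsilon$.

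The paper's fix---which simultaneously resolves both obstacles you flag---is to cover the \emph{larger} balls $\ball(t, 12r)$ by balls of radius $r/2$, take for each nonempty small ball an arbitrary point of $\cli \cup \fac$ lying inside it, and also include $T_0$ itself in $T$. Then every $p \in \bigcup_{t \in T_0} \ball(t,12r)$ shares a radius-$r/2$ ball with some chosen representative, so $\d(p,T)\le r$ and $\d_r(p,T) = 0$; and every $p$ outside all the $12r$-balls has $\d_{6r}(p,T_0) \ge 6r$, whence $\d_r(p,T) \le \d(p,T_0) \le 2\,\d_{6r}(p,T_0)$ using $T_0 \subseteq T$. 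Summing gives $\cost_r(P,T) \le 36\,\OPT_r$. Covering with radius $r/2$ and selecting actual points inside each small ball (rather than abstract net centers that you then perturb) also makes the $\cli \cup \fac$ membership issue disappear.
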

\end{proof}
\begin{proof}
	Let $A$ be $(18,6)$-bicriteria solution for $\cI$ obtained from~\cite{ChakrabartyS18}, as mentioned in~\cite{FominGISZ24}.%\todo[size=\footnotesize]{maybe we give details?}
	That is $\cost_{6r}(P,A) \le 18 \OPT_r$. Consider $B_a:=\ball(a,12r)$ for $a\in A$, and decompose $B_a$ into $2^{O(d)}$ smaller balls, each of radius $r/2$ ---note that this follows from the fact that the metric space has doubling dimension $d$. For each smaller ball $b$ such that $b \cap (\cli \cup \fac) \neq \emptyset$, add an arbitrary $t \in b \cap (\cli \cup \fac) $ to $T$. Finally, add $A$ to $T$. Hence, $|T|= |A| + 2^{O(d)}|A| = k2^{O(d)}$.
	
	It is easy to see that, for every $p \in \bigcup_{c \in A} \ball(c, 12r)$, there is some $q \in T$, such that $\d(p, q) \le r$. Now, consider some $p' \not\in \bigcup_{c \in A} \ball(c, 12r)$. Note that $\d_{6r}(p, A) \ge 6r$, which implies that $\d(p, A) \le 2 \cdot \d_{6r}(p, A)$. Therefore, $\d_{r}(p, T) \le \d_r(p, A) \le \d(p, A) \le 2 \d_{6r}(p, A)$. Therefore, it follows that for all $p \in \cli$, $\d_r(p, T) \le 2 \cdot \d_{6r}(p, A)$. Which implies that,
	$$\cost_r(p, T) \le 2 \cdot \cost_r(p, A) \le 36 \cdot \OPT_r.$$
	This concludes the proof of the lemma. 
\end{proof}

\section{Conclusion} \label{sec:conclusion}
In this paper, we revisit \probname, which was introduced and studied recently by Fomin et al.~\cite{FominGISZ24}. We resolve two open questions explicitly asked in their work, namely, for continuous Euclidean instances from $\real^d$, (1) we obtain an \fptas for \probname that does not have an \fpt\ dependence on the dimension $d$ (and in fact, the dependence on $k$ is $2^{O(k \log k)}$ instead $2^{\poly(k)}$ as in \cite{FominGISZ24}), and (2) we design coresets for the same. Indeed, our technique also generalizes to the $(k, z)$-clustering variant of the problem, which was implicitly considered in \cite{FominGISZ24}, but was not completely resolved. To obtain our algorithmic result, we build upon  insights from the recent framework of Abbasi et al.~\cite{AbbasiClustering23} for clustering in metric spaces of bounded \emph{algorithmic scatter dimension} that encapsulates a broad class of metric spaces. Thus, our result shows that the potential of the framework introduced in \cite{AbbasiClustering23} extends to clustering problems beyond that captured by the monotone norm setting, thus paving the way for obtaining {\sf FPT-ASes} for other clustering problems using the similar technique. 
However, several interesting questions remain.

Firstly, the framework of \cite{AbbasiClustering23} is inherently randomized due to key sampling steps, and its derandomization remains an intriguing open question. In particular, derandomizing the step that samples a witness in our algorithm is an interesting challenge.
In another direction, now that the approximation landscape of the vanilla version of \probname is  beginning to be well-understood, it is natural to explore \emph{constrained} variants of the problem such as those involving fairness or capacity constraints. 

\bibliography{reference}
\end{document}